\title{\Huge$\,$\\[-2.75ex]
{Analog Lagrange Coded Computing}\\[0.50ex]}
\author{\large%
	Mahdi Soleymani, Hessam Mahdavifar, and A. Salman Avestimehr
	\vspace{-.25in}
	\thanks{This work was supported by the National Science Foundation
		under grants CCF--1763348, CCF--1909771, CCF--1941633.}
	\thanks{M.\ Soleymani and H.\ Mahdavifar are with the Department of Electrical Engineering and Computer Science, University of Michigan, Ann Arbor, MI 48104 (email: mahdy@umich.edu and hessam@umich.edu).}
	\thanks{A.\ Salman Avestimehr is with the Department of Electrical
		Engineering, University of Southern California, Los Angeles, CA 90089 USA
		(e-mail: avestimehr@ee.usc.edu).}
}
\newtheorem{exmp}{Example}[section]
\newtheorem{theorem}{{Theorem}}
\newtheorem{lemma}[theorem]{{Lemma}}
\newtheorem{corollary}[theorem]{{Corollary}}
\newtheorem{remark}{{\textbf{Remark}}}
\newcommand{\cF}{{\cal F}}
\DeclareMathAlphabet{\mathbfsl}{OT1}{ppl}{b}{it} 
\newcommand{\bA}{\mathbfsl{A}} 
\newcommand{\bB}{\mathbfsl{B}}
\newcommand{\bH}{\mathbfsl{H}}
\newcommand{\bI}{\mathbfsl{I}}
\newcommand{\bL}{\mathbfsl{L}}
\newcommand{\bN}{\mathbfsl{N}}
\newcommand{\bV}{\mathbfsl{V}}
\newcommand{\bW}{\mathbfsl{W}} 
\newcommand{\bX}{\mathbfsl{X}}
\newcommand{\bY}{\mathbfsl{Y}}
\newcommand{\ba}{\mathbfsl{a}}
\newcommand{\bv}{\mathbfsl{v}}
\newcommand{\bx}{\mathbfsl{x}}
\newcommand{\by}{\mathbfsl{y}}
\newcommand{\bn}{\mathbfsl{n}} 
\newcommand{\bff}{\mathbfsl{f}}
\newcommand{\bSig}{\mathbf{\Sigma}} 
\newcommand{\ti}[1]{\tilde{#1}}
\newcommand*{\rom}[1]{\expandafter\romannumeral #1}
\newcommand{\tL}{\tilde{\bL}}
\newcommand{\AlignFootnote}[1]{%
  \ifmeasuring@
  \else
    \iffirstchoice@
      \footnote{#1}%
    \fi
  \fi}
\newcommand{\norm}[1]{\left\lVert#1\right\rVert}
\newcommand{\be}[1]{\begin{equation}\label{#1}}
\newcommand{\ee}{\end{equation}} 
\newcommand{\eq}[1]{(\ref{#1})}
\renewcommand{\leq}{\leqslant}
\renewcommand{\Bbb}{\mathbb}
\newcommand{\C}{{\Bbb C}} 
\newcommand{\R}{{\Bbb R}} 
\newcommand{\Z}{{\Bbb Z}}
\newcommand{\F}{{\Bbb F}}
\newcommand{\Tref}[1]{Theo\-rem\,\ref{#1}}
\newcommand{\Lref}[1]{Lem\-ma\,\ref{#1}}
\newcommand{\Cref}[1]{Co\-ro\-lla\-ry\,\ref{#1}}
\newcommand{\Rref}[1]{Re\-mark\,\ref{#1}}
\newcommand{\deff}{\mbox{$\stackrel{\rm def}{=}$}}
\newcommand{\erel}{e_{\text{rel}}}
\begin{document}

\maketitle

\begin{abstract}
A distributed computing scenario is considered, where the computational power of a set of worker nodes is used to perform a certain computation task over a dataset that is dispersed among the workers. Lagrange coded computing (LCC), proposed by Yu et al., leverages the well-known Lagrange polynomial to perform polynomial evaluation of the dataset in such a scenario in an efficient parallel fashion while keeping the privacy of data amidst possible collusion of workers. This solution relies on quantizing the data into a finite field, so that Shamir's secret sharing, as one of its main building blocks, can be employed. Such a solution, however, is not properly scalable with the size of dataset, mainly due to computation overflows. To address such a critical issue, we propose a novel extension of LCC to the analog domain, referred to as analog LCC (ALCC). All the operations in the proposed ALCC protocol are done over the infinite fields of $\R$/$\C$ but for practical implementations floating-point numbers are used. We characterize the \emph{privacy} of data in ALCC, against any subset of colluding workers up to a certain size, in terms of the distinguishing security (DS) and the mutual information security (MIS) metrics. Also, the \emph{accuracy} of outcome is characterized in a practical setting assuming operations are performed using floating-point numbers. Consequently, a fundamental trade-off between the accuracy of the outcome of ALCC and its privacy level is observed and is numerically evaluated. Moreover, we implement the proposed scheme to perform matrix-matrix multiplication over a batch of matrices. It is observed that ALCC is superior compared to the state-of-the-art LCC, implemented using fixed-point numbers, assuming both schemes use an equal number of bits to represent data symbols. 
\end{abstract}

\begin{IEEEkeywords} 
Coded computing, privacy-preserving computing, analog coding
\end{IEEEkeywords}

\section{Introduction} 
\label{sec:Introduction}
There has been a growing interest in recent years towards performing computational tasks across networks of computational worker nodes by utilizing their computational power in a parallel fashion \cite{agrawal2000privacy, nikolaenko2013privacy,lee2017speeding,yu2019lagrange,li2020coded}. Computations over massive datasets need to be carried out at an unprecedented scale that entails solutions scalable with the size of datasets associated with a wide range of problems including machine learning \cite{abadi2016tensorflow}, optimization \cite{rabbat2004distributed}, etc. A well-established network architecture to perform such tasks in a distributed fashion consists of a \emph{master} node together with a set of worker nodes having communication links only with the master node \cite{lee2017speeding,yu2019lagrange}. In such systems, a dataset is dispersed among the servers across the network to perform a certain computational task over the dataset. The master node then aggregates the results in order to recover the desired outcome, e.g., the output of a certain function over the dataset.  

 Dispersing data across a network gives rise to several fundamental challenges in practice. One of the major concerns in such systems is to keep the data \emph{private} as the computational tasks often involve sensitive data such as patients recordings, financial transactions, etc \cite{wright2004privacy, kantarcioglu2004privacy, clifton2002tools}. The worker nodes are often assumed to be \emph{honest-but-curious}, i.e., they do not deviate from the protocol but may accumulate the shares of data they receive and try to deduce information about the data. In such settings, the challenge is to utilize the computational power of the nodes while ensuring that \emph{almost} no information about the dataset is revealed to them. Furthermore, this restriction is often extended to preserving the privacy of data against any subset of colluding nodes up to a certain size. 
 
 Several security metrics are considered in different contexts to measure privacy/security of data. This includes semantic security (SS) and distinguishing security (DS) in the cryptography literature \cite{goldwasser1984probabilistic}, mutual information security (MIS) in communication settings \cite{shannon1949communication}, differential security in machine learning \cite{dinur2003revealing}, etc. From the information-theoretic perspective, the \emph{perfect} privacy condition in a distributed computation setting is that \emph{no information} is leaked about the dataset to any of the worker nodes/subsets of colluding worker nodes up to a certain size. To this end, Shamir's seminal secret sharing scheme is the main building block in protocols providing \emph{perfect} privacy in these settings \cite{shamir1979share}. In such protocols, the data symbols are always assumed to be elements of a finite field $\F_p$ leading to perfect privacy guarantees. However, this often comes at the expense of substantial accuracy losses due to fixed-point representation of the data and computation overflows. Especially, this becomes a major barrier in scalability of such protocols with respect to the dataset size. 
 
The seminal Shamir's secret sharing scheme and its various versions are often used to provide information-theoretic security for data, referred to as a secret, while distributing it among a set of servers/users \cite{shamir1979share}. Also, Shamir's scheme serves as the backbone of most of the existing schemes on privacy-preserving multiparty computing such as the celebrated BGW scheme \cite{ben1988completeness}. In $(n,k)$ Shamir's secret sharing scheme, the secret, which  is regarded as an element of a finite field, is encoded to a polynomial of degree $k-1$ whose constant coefficient is the secret and all other coefficients are picked uniformly at random from the field. The shares are then evaluations of the polynomial at $n$ distinct points. The secret can be uniquely decoded when at least $k$ number of shares are available while no information is revealed about the secret otherwise. In order to employ Shamir-based distributed computing protocols the data is quantized and then mapped to a finite field at the beginning. This leads to a possibly substantial loss in the accuracy of the computation results mainly due to computation overflows when the dataset is \textit{large}. In order to overcome this issue, an analog counterpart of Shamir's scheme is recently proposed in  \cite{soleymani2020privacypreserving} and is then utilized to perform a learning task when the data is provided using floating-point numbers. Lagrange coded computing (LCC) \cite{yu2019lagrange} provides a framework to efficiently perform distributed computation over a batch of data in a parallel fashion. It can be utilized to provide privacy-preserving machine learning schemes.  Similar to Shamir's scheme, in LCC, the data is assumed to be an element of a finite field and the secret/data is encoded to a certain polynomial, called Lagrange interpolation polynomial. Hence, the loss in accuracy due to overflow in computations  is inherited to LCC as well.

\subsection{Our contribution }
 
In this paper, we propose a framework to extend the privacy-preserving LCC scheme to the analog domain and refer to it as analog LCC (ALCC). It is assumed that all the worker nodes are honest-but-curious. All the operations in the proposed scheme are done over the infinite fields of $\R$/$\C$ but for practical implementations floating-point numbers are used. The proposed ALCC protocol enables \emph{privately} evaluating a polynomial function over a batch of real/complex-valued dataset in parallel. We characterize the performance of the scheme in terms of the \emph{accuracy} of its outcome, when operations are performed using standard floating-point numbers, and the \emph{privacy} of data in terms of the DS and MIS metrics when any subset of worker nodes up to a certain size can collude. It is shown how various parameters of the ALCC protocol, including parameters associated with Lagrange monomials as well as evaluation points in the complex plane, can be carefully picked in order to provide closed-form bounds for the performance of the protocol from both the privacy and the accuracy perspectives. Furthermore,  a fundamental trade-off between the accuracy of the outcome of ALCC and its privacy level is observed and is numerically evaluated, in terms of various parameters of the scheme, when the scheme is implemented using the floating-point numbers. In a related work, we show that accuracy-privacy trade-offs arise in distributed computing in the analog domain by tuning the noise variance in the underlying protocol \cite{soleymani2020privacypreserving}. However, the main distinction of the current paper is to illustrate that, even for a fixed noise variance, the choice of certain parameters of Lagrange monomials in ALCC leads to a new trade-off between accuracy and privacy which is specific to ALCC. Hence, one has to carefully pick these parameters apart from the noise variance in order to avoid unnecessarily compromising  accuracy/privacy in practice. This is a new fundamental trade-off that does not have a counterpart in either analog adaptations of Shamir's scheme, e.g., \cite{soleymani2020privacypreserving}, or LCC with fixed-point implementation over finite fields \cite{yu2019lagrange}. Also, it is numerically illustrated that ALCC scales better with the number of representation bits considered in the floating-point implementation compared to LCC. Moreover, experiments are shown in which the proposed protocol is implemented to perform matrix-matrix multiplication over a batch of matrices. The results indicate the superiority of the proposed ALCC compared to the state-of-the-art LCC implemented using fixed-point numbers assuming both schemes use an equal number of bits to represent each data symbol. 

Note that LCC simultaneously
provides resiliency against stragglers, security against malicious workers or workers with erroneous returned results, and privacy of the dataset \cite{yu2019lagrange}. Here, we are mainly concerned with the privacy of dataset in the analog domain. Our analysis of the proposed scheme also takes into account the issue with slow/unresponsive nodes, also referred to as stragglers. However, the issue with malicious workers is left for future work.   

\subsection{Related work}

Privacy-preserving distributed computing protocols have been recently studied in a wide range of scenarios to fulfill specific privacy requirements \cite{dahl2018private,barak2019secure,so2019codedprivateml,kumar2019cryptflow,so2020turbo}. Furthermore, secure matrix-matrix multiplication, as one of the main building blocks for various machine learning algorithms, has been extensively studied in the literature \cite{yu2020entangled,aliasgari2020private,d2020gasp,bitar2019private,nodehi2019secure,9229375}. Also, Lagrange coded computing \cite{yu2019lagrange} and its variations \cite{raviv2019private,fahim2019lagrange} provide a framework for evaluating a given polynomial function over a dataset with perfect privacy \cite{yu2019lagrange}. Such protocols have been recently adopted to perform various machine learning tasks. Recently, it is shown that LCC can be employed to break the aggregation barrier in secured federated learning \cite{so2020turbo}. However, these prior works often regard data as elements of a finite filed.  
As a result, they suffer from scalability issues, as discussed earlier. By enabling privacy in the analog domain, ALCC  provides a framework to perform several large-scale tasks, e.g., secure aggregation in federated learning  \cite{so2020turbo}, more efficiently in practice.    

There is also another line of work on privacy-preserving machine learning that utilizes off-the-shelf multi party computation (MPC) protocols\cite{yao1982protocols,ben2019completeness}  to train a model over distributed datasets \cite{nikolaenko2013privacy,gascon2017privacy,mohassel2017secureml,agrawal2000privacy,dahl2018private,chen2019secure}. In these MPC-based machine learning problems, often more than one client are assumed that aim at learning  model parameters collaboratively without sharing sensitive data with each other and  worker nodes. On the other hand, in (A)LCC, it is assumed that all the data is present in one central node/client, called the master node, that utilizes the computational power of worker nodes for speed up while keeping the data private from the workers. In other words, the MPC-based schemes mainly concern with the privacy of sensitive datasets over which a model is trained in a fully distributed fashion, while (A)LCC-based methods provide a framework for privacy-preserving machine learning in which the dataset is offloaded to a cloud-computing environment to gain speed up \cite{so2019codedprivateml}. However, MPC-based ML schemes are also adopted in the case with one central client in the literature \cite{mohassel2017secureml, so2019codedprivateml}. In this approach, only a few number of worker nodes are often considered mainly due to inefficiency of underlying MPC schemes. In a recent work, a fully distributed implementation of LCC is introduced in \cite{so2020scalable}, which is then utilized to train a linear regression model over distributed datasets without considering a \emph{central entity} and is significantly faster than  MPC-based methods. In general, (A)LCC schemes reduce the amount of randomness needed in data encoding and have less storage overhead as well as computation complexity. Moreover, no communication is needed between worker nodes in (A)LCC, a factor that contributes the most to the inefficiency of MPC-based ML in practical systems. There is also a line of work concerning with floating-point implementation of MPC protocols \cite{setty2012taking,aliasgari2013secure,catrinatowards} which requires significantly more rounds of communications and computations compared to the conventional MPC protocols with fixed-point implementation. As a result, the inefficiency of such protocols poses a major difficulty in their implementation as well.

 Another line of work on performing computations over real-valued data is considered in \cite{fahim2019numerically,ramamoorthy2019numerically,das2019distributed,jamali2019coded,charalambides2020numerically}. In these works, the coded distributed computing schemes are adapted to the analog domain by addressing the numerical stability issues arising in the inversion of underlying Vandermonde matrices. However, the privacy constraints are not considered in these works. In this paper, however, our main focus is on providing privacy-preserving schemes in the analog domain. Also, codes in the analog domain have been recently studied in the context of block codes \cite{roth2020analog} as well as subspace codes \cite{soleymani2019analog} for analog error correction. However, secret sharing and privacy-preserving computation in the analog domain are not discussed in these works.

The rest of this paper is organized as follows.
In Section\,\ref{sec:System Model}, the system model is discussed and the proposed protocol is described. The accuracy of the protocol is analyzed in Section\,\ref{sec:Accuracy}. In Section\,\ref{Sec:privacy} the privacy level of data in ALCC is characterized in terms of two well-known notions of security. Various experimental results are provided in Section\,\ref{sec:experiments}. Finally, the paper is concluded in Section\,\ref{sec:conclusion}.

\section{System Model} 
\label{sec:System Model}

Consider a dataset $\bX=(\bX_1, \hdots,\bX_k)$ with $\bX_i \in \R^{m\times n}$ for all $i \in [k]$, where $[k]$ denotes $\{1,2,\dots,k\}$. Each entry of $\bX_i$'s is assumed to be an instance of a continuous random variable with the range $[-r,r]$.  No further assumptions is made on the probability distribution of the entries of $\bX_i$'s.

We consider the problem of evaluating a polynomial $f:\R^{m \times n} \rightarrow \R^{u\times h}$ over the dataset $\bX$ in a distributed fashion while keeping the \textit{privacy} of $\bX$. More specifically, we say $f(\cdot)$ is a $D$-degree polynomial function if all entries of the output matrix are multivariate polynomial functions of the entries of the input with total degree at most $D$, i.e., $\bY=f(\bX)$ implies that
\be{polynomial-def}
y_{ij}=f_{ij}(x_{11},x_{12},\cdots, x_{mn}),
\ee
where $y_{ij}$ is the $(i,j)$ entry of $\bY$, for $i \in [u]$ and $j \in [h]$, $x_{lk}$ is the $(l,k)$ entry of $\bX$, for $l \in [m]$ and $k \in [n]$, and, $f_{ij}$ is a multivariate polynomial of total degree at most $D$. We may write the right hand side of \eqref{polynomial-def} as $f_{ij}(\bX)$, or simply $f_{ij}$ when the argument is clear from the context, throughout the rest of the paper. The distributed computing setup consists of a master node and $N$ worker nodes/parties. It is assumed that there is no communication link between the parties. More specifically, in this setup, the goal of the master node is to compute $f(\bX_i)$ for all $i \in [k]$, where $f$ is a degree-$D$ polynomial, using the computational power of the parties. This is done in such a way that the dataset is kept \emph{private} from the parties assuming up to a certain threshold, denoted by $t$, of them can collude. The notion of privacy in the analog domain will be clarified in Section \ref{Sec:privacy}. Note that this setup is similar to the one considered for LCC in \cite{yu2019lagrange} with the main difference that in \cite{yu2019lagrange} the dataset and all the computations are assumed to be over a finite field. More specifically, our problem setup can be regarded as an extension of the problem setup considered in \cite{yu2019lagrange} to the analog domain.

Next we discuss the encoding process in analog Lagrange coded computing (ALCC), i.e., how to encode the dataset $\bX$ into the shares distributed to the worker nodes. Let $\bW$ denote $(\bX_1, \hdots, \bX_k, \bN_1, \hdots, \bN_t)$ where $\bN_i$'s are $m\times n$ random matrices with i.i.d. entries drawn from a zero-mean circular symmetric complex Gaussian distribution with standard deviation $\frac{\sigma_n}{\sqrt{t}}$, denoted by $\mathcal{N}(0,\frac{\sigma_n^2}{t})$, with $t$ being the maximum number of colluding parties. Note that a zero-mean circular symmetric  complex Gaussian random variable (RV) with variance $\sigma^2$ consists of two i.i.d. zero-mean Gaussian RV's with variance $\frac{\sigma^2}{2}$ as its real and imaginary part. Let $\gamma$ and $\omega$ denote the $N$-th and the $(k+t)$-th root of unity, respectively. In other words, $\gamma = \exp(\frac{2\pi i}{N})$ and $\omega = \exp(\frac{2\pi i}{k+t})$, where $i^2=-1$. In ALCC, the Lagrange polynomial is constructed as 
\be{Lagrange_polynomial} 
u(z)=\sum_{j=1}^k \bX_j l_j(z)+\sum_{j=k+1}^{k+t} \bN_{j-k} l_j(z)=\sum_{j=1}^{k+t} \bW_i l_j(z),
\ee
where $l_j(\cdot)$'s are Lagrange monomials defined as 
\be{Lagrange_monomials} 
l_j(z)=\prod_{l\in [k+t]\setminus j} \frac{z-\beta_l}{\beta_j-\beta_l},
\ee 
for all $j\in [k+t]$. Furthermore, the parameters $\beta_j$'s are picked to be equally spaced on the circle of radius $\beta$ centered around $0$ in the complex plane, for some $\beta \in \R$, i.e.,
\be{beta}
\beta_j=\beta \omega^{j-1}.
\ee
The shares of encoded dataset to be distributed to the worker nodes consist of the evaluation of $u(z)$ over the $N$-th roots of unity in the complex plane, i.e.,
\be{shares}
\bY_i=u(\alpha_i),
\ee
where 
 \be{alpha}
 \alpha_i=\gamma^{i-1},
 \ee
 is sent to node $i$, for $i \in [N]$. The choice of $\alpha_i$'s and $\beta_j$'s are demonstrated in Figure\,\ref{circle} for the choice of parameters $k=6$, $t=2$, and $N=16$ in the complex plane. It will be clarified in Section\,\ref{Analytical results} that the specific choice of $\beta_j$'s according to \eqref{beta} enables characterizing a closed-form upper bound on the \emph{absolute error} of the outcomes of ALCC. In this context, the absolute error is the magnitude of the difference between the ALCC outcome in a practical setting and the true result of the computation. 
  
  \begin{figure}[t]
  	\begin{center}
  		\includegraphics[width=7cm]{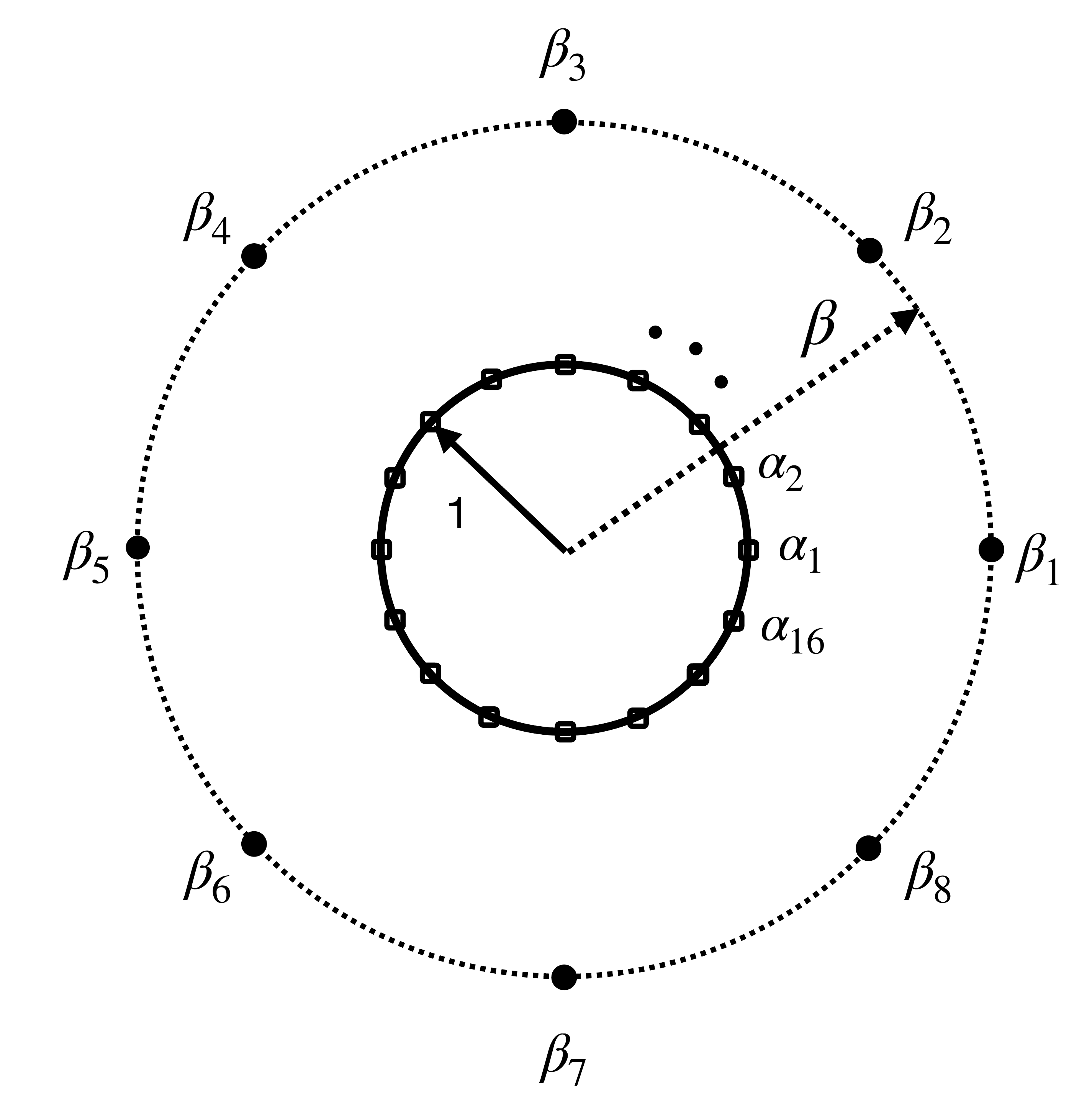} 
  		\caption{\small Demonstration of  the locations of $\alpha_i$'s and $\beta_j$'s in the complex plane for $k=6, t=2, N=16$. Both circles are centered at the origin.}  \label{circle}
  	\end{center}
  \end{figure}
  
Next, we discuss the decoding step during which the master node recovers the desired outcome by collecting and processing the results returned by a sufficient number of worker nodes. The $i$-th node computes $f(\bY_i)$ and returns the result back to the master node. The master node then recovers $f(\bX_i)$, for $i \in [k]$, in two steps. In the first step, it recovers the polynomial $f(u(z))$ by using the results returned from at least $(k+t-1)D+1$ worker nodes. Note that this is the minimum number of returned evaluations needed to guarantee a successful interpolation of $f(u(z))$ since $f(u(z))$ has degree $(k+t-1)D$. For ease of notation, let 
\be{Dtilde-def}
\tilde{D} = (k+t-1)D.
\ee
In the second step, to recover $f(\bX_i)$'s, the master node computes $f(\beta_j)$ for $j \in [k]$. Note that $u(\beta_j)=\bW_j$ for $j \in[k+t]$, since $l_j(\beta_i)$ is $1$ for $i=j$ and is zero otherwise.

The ALCC protocol, as described above, can also take into account the issue regarding stragglers same as how it is done in LCC \cite{yu2019lagrange}. Let the maximum number of stragglers be denoted by $s$. Hence, the number of computational parties is assumed to be $N=\tilde{D}+s+1$. 
 
\noindent
\begin{remark}\label{floating-point}
In theory, if the computations are done over the complex numbers with infinite precision, then $f(\bX_i)$'s are computed accurately. In practice, however, data is represented using a finite number of bits, either as fixed point or floating point. We assume floating-point representation for data symbols and operations involving them in our analysis. This is more suitable to mimic operations over complex (real) numbers. Let $b_m$ denote the number of precision bits  in the floating-point representation, referred to as the  \emph{mantissa}, and $b_e$ denote the number of bits used to represent the exponent part, referred to as the  \emph{exponent}. Note that the entries of the noise matrices $\bN_i$'s are bounded in practice. In other words, for practical purposes, it is assumed that the entries of the noise matrices are drawn from the Gaussian distribution that is truncated to $\left[-\theta \frac{\sigma_n}{\sqrt{t}},\ \theta \frac{\sigma_n}{\sqrt{t}}\right]$, for some $\theta \in \R$.   
 \end{remark}
 \section{Accuracy Analysis } 
 \label{sec:Accuracy}
In this section, accuracy of the final outcome of ALCC, specified in Section \ref{sec:System Model}, is characterized in terms of various other parameters of the scheme. This is done by assuming that floating-point numbers are used to represent data symbols and to carry out operations involving them.

\subsection{Analytical results}\label{Analytical results}

We start by providing an alternative characterization for the Lagrange monomials, defined in \eqref{Lagrange_monomials}, given the certain values for $\beta_j$'s specified in \eqref{beta}. This is done in the following lemma. 
     \begin{lemma}\label{L_j lemma}
     	For all $j \in [k+t]$, we have
     	\be{Lagrange_monomial_statndard}
     	l_j(z)=\frac{1}{k+t} \sum_{l=0}^{k+t-1}(\frac{z}{\beta_j})^{l}.
     	\ee
     \end{lemma}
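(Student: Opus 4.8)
The plan is to exploit the fact that, by \eqref{beta}, the points $\beta_1,\dots,\beta_{k+t}$ are precisely the $k+t$ distinct roots of the polynomial $z^{k+t}-\beta^{k+t}$. Indeed, $\beta_j^{k+t}=\beta^{k+t}\omega^{(j-1)(k+t)}=\beta^{k+t}$ since $\omega$ is the $(k+t)$-th root of unity and hence $\omega^{k+t}=1$. First I would rewrite the numerator of the Lagrange monomial defined in \eqref{Lagrange_monomials}: because $\prod_{l\in[k+t]}(z-\beta_l)=z^{k+t}-\beta^{k+t}$, the product over $l\neq j$ collapses to $\prod_{l\in[k+t]\setminus j}(z-\beta_l)=(z^{k+t}-\beta^{k+t})/(z-\beta_j)$.

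Next I would evaluate the denominator $\prod_{l\in[k+t]\setminus j}(\beta_j-\beta_l)$ in closed form. Setting $P(z)=z^{k+t}-\beta^{k+t}=\prod_{l\in[k+t]}(z-\beta_l)$, the product rule gives $P'(\beta_j)=\prod_{l\in[k+t]\setminus j}(\beta_j-\beta_l)$, since every term of $P'$ except the one differentiating the $(z-\beta_j)$ factor vanishes at $z=\beta_j$. As $P'(z)=(k+t)z^{k+t-1}$, this denominator equals $(k+t)\beta_j^{k+t-1}$. This clean evaluation is the crux of the simplification, as it bypasses a direct and messy product over the scaled roots of unity.

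Combining the two expressions, and using $\beta^{k+t}=\beta_j^{k+t}$ once more, I would write
\[
l_j(z)=\frac{z^{k+t}-\beta_j^{k+t}}{(z-\beta_j)\,(k+t)\,\beta_j^{k+t-1}}.
\]
The standard geometric factorization $\frac{z^{k+t}-\beta_j^{k+t}}{z-\beta_j}=\sum_{l=0}^{k+t-1}z^{l}\beta_j^{k+t-1-l}$ then yields $l_j(z)=\frac{1}{k+t}\sum_{l=0}^{k+t-1}z^{l}\beta_j^{-l}=\frac{1}{k+t}\sum_{l=0}^{k+t-1}(\frac{z}{\beta_j})^{l}$, which is exactly \eqref{Lagrange_monomial_statndard}. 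The argument is essentially a bookkeeping exercise; the only step demanding care is the denominator evaluation, where recognizing $\prod_{l\neq j}(\beta_j-\beta_l)$ as $P'(\beta_j)$ is what makes the identity fall out cleanly, and this is the main (mild) obstacle.
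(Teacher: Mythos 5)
Your proof is correct, and it reaches \eqref{Lagrange_monomial_statndard} by a slightly different route than the paper. The paper first divides numerator and denominator of each factor of \eqref{Lagrange_monomials} by $\beta_j$, so that $\beta_l/\beta_j=\omega^{l-j}$ turns everything into the nontrivial $(k+t)$-th roots of unity, and then applies the single factorization $\prod_{h=1}^{k+t-1}(x-\omega^h)=\frac{x^{k+t}-1}{x-1}=\sum_{h=0}^{k+t-1}x^h$ twice: once with $x=z/\beta_j$ for the numerator and once with $x=1$ to get the denominator $k+t$. You instead keep the unnormalized points, observe that the $\beta_l$ are exactly the roots of $P(z)=z^{k+t}-\beta^{k+t}$, and evaluate the denominator as $P'(\beta_j)=(k+t)\beta_j^{k+t-1}$ via the logarithmic-derivative identity $\prod_{l\neq j}(\beta_j-\beta_l)=P'(\beta_j)$. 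The two computations are equivalent at bottom --- the paper's evaluation of $\prod_{h=1}^{k+t-1}(1-\omega^h)=k+t$ is precisely your $P'(\beta_j)$ after normalization --- but your derivative trick generalizes more readily to node sets that are not scaled roots of unity (it gives $l_j(z)=\frac{P(z)}{(z-\beta_j)P'(\beta_j)}$ for any simple root configuration), whereas the paper's normalization makes the reduction to the geometric sum slightly more self-contained and elementary. Either argument is complete; no gaps.
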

     
     \begin{proof}
     	Using \eqref{Lagrange_monomials}, one can write
     	\be{l_j-rewrite}
     	l_j(z)=\prod_{l\in [k+t] \setminus j} \frac{\frac{z}{\beta_j}-\frac{\beta_l}{\beta_j}}{1-\frac{\beta_l}{\beta_j}}=\prod_{h=1}^{k+t-1} \frac{\frac{z}{\beta_j}-\omega^h}{1-\omega^h}.
     	\ee
     Note that $\omega^h$, for $h=0,1,\hdots,k+t-1$, is a $(k+t)$-th root of unity. Hence, we have
     	\be{x_formula}
     	\prod_{h=1}^{k+t-1} (x-\omega^h) = \frac{x^{k+t}-1}{x-1}= \sum_{h=0}^{k+t-1}x^h.
     	\ee
     	Using \eqref{x_formula} one can write 
     	\be{num}
     	\prod_{h=1}^{k+t-1} (\frac{z}{\beta_j})-\omega^h=\frac{(\frac{z}{\beta_j})^{k+t}-1}{\frac{z}{\beta_j}-1}=\sum_{h=0}^{k+t-1}(\frac{z}{\beta_j})^h,
     	\ee
     	and 
     	\be{den}
     	\prod_{h=1}^{k+t-1} (1-\omega^h)=\sum_{h=0}^{k+t-1}1=k+t.
     	\ee
     	Combining \eqref{num} and \eqref{den} completes the proof. 
     \end{proof}
 
     
      In the following lemma, we use \Lref{L_j lemma} to characterize the coefficients of Lagrange polynomial in terms of $\bW$ and other parameters of the scheme. The result will be used later to derive an upper bound on the absolute error of the outcome of ALCC.

     \begin{lemma}\label{Lagrange_polynomial_standard}
     	The Lagrange polynomial, as specified in \eqref{Lagrange_polynomial}, can be written as 
     	\be{Lagrange_polynomial_rearranged}
     	u(z)=\sum_{l=0}^{k+t-1} \frac{\ti{\bW}_l}{\beta^l}z^l,
     	\ee
     	where 
     	\be{DFT_W}
     	 \ti{\bW_l}\deff\sum_{j=0}^{k+t-1} \bW_j \omega^{-jl}.
     	\ee
     	  \end{lemma}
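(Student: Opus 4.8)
The plan is to substitute the closed form of the Lagrange monomials from \Lref{L_j lemma} directly into the definition \eqref{Lagrange_polynomial} of $u(z)$ and then reorganize the resulting double sum by powers of $z$. Concretely, I would begin by writing
\[
u(z)=\sum_{j=1}^{k+t}\bW_j\, l_j(z)=\frac{1}{k+t}\sum_{j=1}^{k+t}\bW_j\sum_{l=0}^{k+t-1}\Bigl(\frac{z}{\beta_j}\Bigr)^{l},
\]
using \eqref{Lagrange_monomial_statndard}. The only additional structural input is the explicit form of the evaluation centers, $\beta_j=\beta\omega^{j-1}$ from \eqref{beta}, which is what lets me separate the dependence on the summation index $j$ from the dependence on $z$.

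First I would use $\beta_j=\beta\omega^{j-1}$ to factor each monomial term as $(z/\beta_j)^{l}=(z/\beta)^{l}\,\omega^{-(j-1)l}$, so that the power of $z$ and the power of $\omega$ decouple. Since both sums are finite, I may freely interchange the order of summation and collect the coefficient of $z^{l}$, obtaining
\[
u(z)=\sum_{l=0}^{k+t-1}\frac{z^{l}}{\beta^{l}}\,\Bigl(\frac{1}{k+t}\sum_{j=1}^{k+t}\bW_j\,\omega^{-(j-1)l}\Bigr).
\]
The bracketed inner sum is precisely the quantity I want to identify with $\ti{\bW}_l$ of \eqref{DFT_W}: it is a (normalized) discrete Fourier transform of the data-plus-noise vector $\bW$ evaluated at frequency $l$. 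Recognizing this sum as $\ti{\bW}_l$ then yields the claimed form \eqref{Lagrange_polynomial_rearranged}.

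The one point requiring care --- and the main obstacle here is bookkeeping rather than anything conceptual --- is the index alignment together with the placement of the normalization. The centers $\beta_j$ are indexed by $j\in[k+t]$ starting at $1$, so the exponent appearing is $\omega^{-(j-1)l}$, whereas the transform in \eqref{DFT_W} runs over $j\in\{0,\dots,k+t-1\}$ with exponent $\omega^{-jl}$. I would therefore perform the reindexing $j\mapsto j-1$ in the inner sum so that both the summation range and the exponent start at $0$, after which the expression matches \eqref{DFT_W} term by term. I would also be explicit about where the factor $\frac{1}{k+t}$ coming from \Lref{L_j lemma} is absorbed, so that the coefficient multiplying $z^{l}/\beta^{l}$ coincides exactly with $\ti{\bW}_l$ as defined. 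No convergence or analytic issues arise, since every sum is finite; once the indices are aligned the identity is immediate.
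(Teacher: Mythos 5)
Your proof is correct and follows essentially the same route as the paper's: substitute the closed form of $l_j(z)$ from \Lref{L_j lemma}, use $\beta_j=\beta\omega^{j-1}$ to decouple the $z$-dependence from the $j$-dependence, interchange the finite sums, and recognize the inner sum as the DFT $\ti{\bW}_l$. The $\tfrac{1}{k+t}$ normalization you flag is a genuine wrinkle---the paper's own final line retains this factor even though it is absent from \eqref{Lagrange_polynomial_rearranged} and from the definition \eqref{DFT_W}---so your instinct to track explicitly where it gets absorbed is warranted rather than mere bookkeeping.
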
 
     	\begin{proof}
     		The proof is by combining \eqref{Lagrange_polynomial} and \Lref {L_j lemma} as follows:
     		\begin{align}
     		u(z)&=\sum_{j=0}^{k+t-1} \bW_j \frac{1}{k+t}\sum_{l=0}^{k+t-1}(\frac{z}{\beta_j})^{l}\\
     		&=\frac{1}{k+t}\sum_{j,l=0}^{k+t-1}\bW_j (\frac{z}{\beta})^l \omega^{-jl}\\
     		&=\frac{1}{k+t}\sum_{l=0}^{k+t-1} (\frac{z}{\beta})^l (\sum_{j=0}^{k+t-1} \bW_j \omega^{-jl})\\
     		&=\frac{1}{k+t}\sum_{l=0}^{k+t-1} \frac{\ti{\bW_l}}{\beta^l}z^l.
     		\end{align}
     	\end{proof}
     	
 Let $w_{gl}^j$ and $\ti{w}_{gl}^j$ denote the $(g,l)$ entry of $\bW_j$ and $\ti{\bW}_j$, for $(g,l) \in [m]\times[n]$, respectively. Then \eqref{DFT_W} implies that $(\ti{w}_{gl}^0, \hdots, \ti{w}_{gl}^{k+t-1})$ is the discrete Fourier transform (DFT) of  $(w_{gl}^0, \hdots, w_{gl}^{k+t-1})$. Hence, the encoder can utilize the fast algorithms developed for the DFT implementation to compute $\ti{\bW}_j$'s and then computes the shares sent to the nodes according to  \eqref{Lagrange_polynomial_rearranged}, see, e.g., \cite{nussbaumer1981fast}.
 
 The decoder's task is to interpolate the polynomial $f(u(z))$ followed by evaluating it over $\alpha_i$'s, for $i\in[N]$. Let the polynomial $f(u(z))$ be expressed as 
 \be{f}
 f(u(z))= \sum_{i=0}^{\tilde{D}} \bV_i z^i,
 \ee 
 with $\bV_i \in \R ^{u \times h}$. Let $A=\{i_1, \hdots, i_{\tilde{D}+1}\}$ denote the indices of \textit{non-straggler} users, i.e., users that have returned their computation results to the master node. The interpolation step at the decoder is equivalent to inverting the following matrix:
\be{B-def}
 \bold{B}_{(\tilde{D}+1)\times (\tilde{D}+1)} \ \deff\
 \begin{bmatrix}
 1 & \gamma^{i_1} & \gamma^{2i_1} & \hdots &\gamma^{\tilde{D}i_1}\\
 1 & \gamma^{i_2} & \gamma^{2i_2} & \hdots &\gamma^{\tilde{D}i_2}\\
 \vdots & \vdots & \vdots & \vdots &\vdots\\
 1 & \gamma^{i_{\tilde{D}+1}} & \gamma^{2i_{\tilde{D}+1}} & \hdots &\gamma^{\tilde{D}i_{\tilde{D}+1}}
 \end{bmatrix}.
\ee
 
\begin{remark}\label{perturbation}
In general, in a system of linear equations $\bA\bx=\boldsymbol{y}$, where $\bx$ is a vector of unknown variables and $\bA$ is referred to as \emph{coefficient matrix}, the perturbation in the solution caused by the perturbation in $\boldsymbol{y}$ is characterized as follows. Let $\hat{\boldsymbol{y}}$ denote a noisy version of $\boldsymbol{y}$, where the noise can be caused by round-off errors, truncation, etc. Let also $\hat{\bx}$ denote the solution to the considered linear system when $\boldsymbol{y}$ is replaced by $\hat{\boldsymbol{y}}$. Let $\Delta \bx\deff \hat{\bx}-\bx$ and $\Delta \boldsymbol{y}\deff \hat{\boldsymbol{y}}-\boldsymbol{y}$ denote the perturbation in $\boldsymbol{x}$ and $\boldsymbol{y}$, respectively. Then the relative perturbations of $\boldsymbol{x}$ is bounded in terms of that of $\boldsymbol{y}$ as follows \cite{demmel1997applied}:
\be{relative_error}
\frac{\norm{\Delta \bx}}{\norm{\bx}}\leq \kappa_{\bA}\frac{\norm{\Delta \boldsymbol{y}}}{\norm{ \boldsymbol{y}}},
\ee
where $\kappa_{\bA}$ is the condition number of $\bA$ and $\norm{.}$ denotes the $l^2$-norm.
\end{remark}
 For $(g,l) \in [u] \times [h]$, let $v_{gl}^i$ denote the $(g,l)$ element of $\bV_i$ for $i = 0,1,\dots,\tilde{D}$, where $\bV_i$ is specified in \eq{f}, and $f_{gl}^j$ denote the $(g,l)$ element of $f(\bX_j)$ for $j \in [k]$. Let also 
 $$
 \bv_{gl} \deff(v_{gl}^0, \hdots, v_{gl}^{\tilde{D}}),
 $$
 for $g \in [u]$ and $l \in [h]$. For ease of notation, let
 \be{beta_def}
 \overline{\beta} = \frac{\beta^{\tilde{D}+2}-1}{\beta^2-1}.
 \ee
 The following lemma establishes a relation between the error in the entries of the outcome of the scheme, i.e., $f_{gl}^j$, and the entries of $\bV_i$, i.e., $v_{gl}^i$. Note that in this analysis the error due to representing $\alpha_j$'s and $\beta_j$'s using floating-point numbers is discarded as it is dominated in practice by the error imposed by the precision loss in the elements of $\bV_i$'s, specified in \eqref{f}.
 
  \begin{lemma}
  \label{lem3}
For all $g\in [u]$, $l \in [h]$, and $j \in [k]$ we have  	
	  	  \be{outcome_error}
	\Delta f_{gl}^j \leq \overline{\beta} \norm{\bv_{gl}} \kappa_{\bB} 2^{-b_m},
	\ee
where $\overline{\beta}$ is defined in \eqref{beta_def}, $\bB$ is defined in \eqref{B-def}, and $b_m$ is the number of precision bits in the floating-point representation, specified in \Rref{floating-point}. 
  	\end{lemma}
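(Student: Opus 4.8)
The plan is to separate the two distinct sources of error that stand between the floating-point outcome and the true value $f_{gl}^j$, bound each, and then combine. Recall from Section\,\ref{sec:System Model} that $u(\beta_j)=\bW_j=\bX_j$ for $j\in[k]$, so $f(\bX_j)=f(u(\beta_j))$; expanding $f(u(z))$ as in \eqref{f} and reading off the $(g,l)$ entry gives the exact recovery formula $f_{gl}^j=\sum_{i=0}^{\tilde D} v_{gl}^i\,\beta_j^i$. In the idealized protocol the coefficients $v_{gl}^i$ are obtained by inverting $\bB$ against the returned evaluations, and the outcome is this interpolated polynomial evaluated at $\beta_j$. In practice the returned evaluations carry round-off error, which corrupts the interpolated coefficients (producing a perturbation $\Delta v_{gl}^i$), and the corrupted coefficients are then fed into the evaluation step. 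Since the error of representing $\beta_j$ itself is discarded as explained just before the statement, the only error in the outcome is the propagation of $\Delta\bv_{gl}$ through the evaluation, namely $\Delta f_{gl}^j=\sum_{i=0}^{\tilde D}\Delta v_{gl}^i\,\beta_j^i$.

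First I would bound this propagation. Because $\beta_j=\beta\omega^{j-1}$ with $|\omega|=1$, we have $|\beta_j|^i=\beta^i$, so the Cauchy--Schwarz inequality yields
\[
\big|\Delta f_{gl}^j\big|\ \le\ \norm{\Delta \bv_{gl}}\,\sqrt{\sum_{i=0}^{\tilde D}\beta^{2i}}\ =\ \norm{\Delta \bv_{gl}}\,\sqrt{\frac{\beta^{2\tilde D+2}-1}{\beta^2-1}}.
\]
The next step is to show that this geometric factor is dominated by $\overline{\beta}$ from \eqref{beta_def}, i.e. $\frac{\beta^{2\tilde D+2}-1}{\beta^2-1}\le \overline{\beta}^{\,2}$. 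Clearing the positive denominator $(\beta^2-1)^2$, this reduces to $(\beta^{2\tilde D+2}-1)(\beta^2-1)\le(\beta^{\tilde D+2}-1)^2$; expanding both sides, the difference collapses to $-(\beta^{\tilde D+1}-\beta)^2\le 0$, which holds for all real $\beta$. Hence $\big|\Delta f_{gl}^j\big|\le\overline{\beta}\,\norm{\Delta\bv_{gl}}$.

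It remains to bound $\norm{\Delta \bv_{gl}}$. The vector $\bv_{gl}$ solves the linear system $\bB\,\boldsymbol{v}_{gl}=\boldsymbol{y}_{gl}$, where $\boldsymbol{y}_{gl}$ collects the $(g,l)$ entries returned by the $\tilde D+1$ non-straggler workers and $\bB$ is the Vandermonde matrix \eqref{B-def}. Under the standard floating-point model with $b_m$ mantissa bits, each returned entry carries relative error at most the unit round-off $2^{-b_m}$, so component-wise $|\Delta y|\le 2^{-b_m}|y|$ and therefore $\norm{\Delta\boldsymbol{y}_{gl}}\le 2^{-b_m}\norm{\boldsymbol{y}_{gl}}$. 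Applying the perturbation bound \eqref{relative_error} of \Rref{perturbation} with the coefficient matrix taken to be $\bB$ gives $\norm{\Delta\bv_{gl}}\le \kappa_{\bB}\,\norm{\bv_{gl}}\,2^{-b_m}$, and substituting this into the bound from the previous paragraph yields exactly \eqref{outcome_error}. I expect the only genuinely delicate step to be the algebraic domination $\sqrt{\sum_{i=0}^{\tilde D}\beta^{2i}}\le\overline{\beta}$: recognizing that the somewhat unusual form of $\overline{\beta}$ in \eqref{beta_def} is precisely what produces the perfect-square cancellation $-(\beta^{\tilde D+1}-\beta)^2$. The remaining steps are a routine combination of Cauchy--Schwarz, the floating-point rounding model, and the condition-number bound already recorded in \Rref{perturbation}.
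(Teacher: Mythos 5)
Your proof follows the same route as the paper's: the same linear system $\tilde{\bff}_{gl}=\bB\bv_{gl}$, the same invocation of \Rref{perturbation} with the relative round-off assumption $\norm{\Delta\tilde{\bff}_{gl}}/\norm{\tilde{\bff}_{gl}}\le 2^{-b_m}$ to obtain $\norm{\Delta\bv_{gl}}\le\kappa_{\bB}\norm{\bv_{gl}}2^{-b_m}$, and the same Cauchy--Schwarz propagation through $f_{gl}^j=\boldsymbol{\beta}_j\cdot\bv_{gl}$. The only divergence is the final geometric-sum step: the paper's \eqref{beta_sum} writes $\sum_{i=0}^{\tilde{D}}\beta^{2i}=\overline{\beta}$ (as written a slip, since the sum equals $\frac{\beta^{2\tilde{D}+2}-1}{\beta^2-1}$, after which one still needs $\sqrt{\overline{\beta}}\le\overline{\beta}$), whereas your perfect-square identity showing $\sqrt{\sum_{i=0}^{\tilde{D}}\beta^{2i}}\le\overline{\beta}$ directly is the cleaner way to land exactly on the stated constant.
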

  \begin{proof}
  Let $\tilde{f}_{gl}^i \deff f_{gl}(u(\alpha_i))$ for all $i\in [N]$, $g\in[u]$ and $l\in[h]$, and $\tilde{\bff}_{gl}\deff (\tilde{f}_{gl}^{i_1}, \dots, \tilde{f}_{gl}^{i_{\tilde{D}+1}})$, where $i_1, \dots, i_{\tilde{D}+1}$ represent the indices of worker nodes that returned the computation results. Note that the evaluations of \eqref{f} over $\alpha_i$'s, for $i\in A$, can be regarded as $u\times h$ systems of linear equations all with $\bB$ as the underlying coefficient matrix, i.e., 
  \be{set-of-eqs}
  \tilde{\bff}_{gl}=\bB \bv_{gl},
  \ee
  for all $g\in[u]$ and $l\in[h]$. By utilizing the statement in \Rref{perturbation} one can write
 \be{perturb-result}
\frac{\norm{\Delta \bv_{gl}}}{\norm{\bv_{gl}}}\leq \kappa_{\bB}\frac{\norm{\Delta \tilde{\bff_{gl}}}}{\norm{ \tilde{\bff_{gl}}}}.
\ee
Note that the precision error in the considered floating-point numbers is bounded by $2^{-b_m}$ since it is assumed that no other error is imposed on the computation results in the worker nodes. Hence, one can write 
  \be{assumption}
  \frac{\norm{\Delta \tilde{\bff_{gl}}}}{\norm{ \tilde{\bff_{gl}}}}\leq 2^{-b_m}.
  \ee
   Combining \eqref{perturb-result} with \eqref{assumption} results in
  	\be{elementwise_error}
  	\frac{\norm{\Delta \bv_{gl}}}{\norm{\bv_{gl}}}\leq \kappa_{\bB} 2^{-b_m},
  	\ee
  	for all $g\in[u]$ and $l\in[h]$. Moreover, note that 
  $$
  f(\bX_j)=f(u(\beta_j))= \sum_{i=0}^{\tilde{D}} \bV_i \beta_j^i.
  $$
  Let $\boldsymbol{\beta}_j$ denote $(1,\beta_j, \beta_j^2, \hdots, \beta_j^{\tilde{D}})$, for $j \in [k]$. Then one can write
  	\be{dot}
  	f_{gl}^j=\boldsymbol{\beta_j} \cdot\bv_{gl},
  	\ee
which implies that
  	\be{dot_error}
  	\Delta f_{gl}^j \leq \norm{\boldsymbol{\beta}_j} \norm{\Delta\bv_{gl}},
  	\ee 
where $\cdot$ denotes the inner product operation. Note that for all $j \in [k]$,
  	\be{beta_sum}
  	\norm{\boldsymbol{\beta}_j}^2\leq \sum_{i=0}^{\tilde{D}} \beta^{2i}=\frac{\beta^{\tilde{D}+2}-1}{\beta^2-1} = \overline{\beta}.
  	  	\ee
Combining \eqref{elementwise_error}, \eqref{dot_error} and \eqref{beta_sum} yields
  	  \be{}
  	  	\Delta f_{gl}^j \leq \overline{\beta} \norm{\bv_{gl}} \kappa_{\bB} 2^{-b_m},
  	  	\ee
  	  	which completes the proof.
  \end{proof}

Let $c_{ij}$ denote the maximum absolute value of the coefficients of $f_{ij}(\cdot)$ for all $i \in [u]$ and $j \in [h]$, $c \deff \max_{i,j} c_{ij}$ , and $\lambda_{\min}$ denote the minimum singular value of $\bB$, defined in \eqref{B-def}. In the next theorem, an upper bound on the absolute error in the outcome of the protocol is provided for the general class of polynomials over matrices, defined in \eqref{polynomial-def}. 

\begin{theorem}\label{accuracy}
	The absolute error on the entries of $f(\bX_j)$, for $j \in [k]$, in the outcome of ALCC is bounded as follows:
	\be{accuracy_bound}
	\Delta f_{gl}^j \leq \overline{\beta} \frac{c (mne)^D}{\lambda_{\min}} \sqrt{\tilde{D}+1} (kr+t\theta \sigma_n)^D\kappa_{\bB} 2^{-b_m}(1+O(\frac{1}{\sigma_n})),
	\ee
where $\overline{\beta}$ is defined in \eq{beta_def},  $\bB$ is defined in \eqref{B-def}, and $b_m$ is the number of precision bits in the floating-point representation, specified in \Rref{floating-point}.
\end{theorem}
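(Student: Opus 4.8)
The plan is to start from the per-entry error bound of Lemma~\ref{lem3}, namely $\Delta f_{gl}^j \leq \overline{\beta}\,\norm{\bv_{gl}}\,\kappa_{\bB}\,2^{-b_m}$, and to upper bound the only data-dependent quantity in it, the norm $\norm{\bv_{gl}}$, in terms of the problem parameters $c$, $m$, $n$, $D$, $k$, $t$, $r$, $\theta$, and $\sigma_n$. Recall that $\bv_{gl} = (v_{gl}^0, \hdots, v_{gl}^{\tilde{D}})$ collects the coefficients of the $(g,l)$ entry of the polynomial $f(u(z)) = \sum_{i=0}^{\tilde{D}} \bV_i z^i$. Since $\norm{\bv_{gl}} \leq \sqrt{\tilde{D}+1}\,\max_i |v_{gl}^i|$, it suffices to bound each individual coefficient $|v_{gl}^i|$ and then absorb the $\sqrt{\tilde{D}+1}$ factor, which already appears in the target bound \eqref{accuracy_bound}.

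\medskip

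First I would bound the coefficients $v_{gl}^i$ of $f(u(z))$. The cleanest route is to interpret these coefficients as evaluations of the DFT/interpolation map and bound them by the magnitude of $f_{gl}(u(z))$ on the circle $|z|=1$ (the evaluation circle of the $\alpha_i$'s). Concretely, the coefficient vector is recovered by inverting $\bB$ applied to the evaluation vector $\tilde{\bff}_{gl}$, so $\max_i |v_{gl}^i|$ is controlled by $\lambda_{\min}^{-1}$ times the norm of the evaluations $f_{gl}(u(\alpha_i))$ --- this is precisely where the factor $\lambda_{\min}^{-1}$ enters. It then remains to bound $|f_{gl}(u(\alpha_i))|$. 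Since $f_{gl}$ is a multivariate polynomial of total degree at most $D$ in the $mn$ entries of its matrix argument, with coefficients bounded in absolute value by $c$, and the number of monomials of degree at most $D$ in $mn$ variables is at most $(mn)^D$-ish (more precisely bounded using $\binom{mn+D}{D} \leq (mne/D)^D$, which is where the factor $(mne)^D$ originates), I would write
\be{Qbound}
|f_{gl}(u(\alpha_i))| \leq c\,(mne)^D\,\bigl(\max_{g',l'} |(u(\alpha_i))_{g'l'}|\bigr)^D.
\ee
The inner factor is the largest magnitude of an entry of the matrix $u(\alpha_i)$ evaluated on $|z|=1$.

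\medskip

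The key remaining estimate is to bound $\max_{g',l'}|(u(\alpha_i))_{g'l'}|$ on $|z|=1$. Here I would use the representation of $u(z)$ as an interpolation through the data matrices $\bX_1,\dots,\bX_k$ and the noise matrices $\bN_1,\dots,\bN_t$, exploiting that the entries of the $\bX_j$ lie in $[-r,r]$ and the entries of the truncated-Gaussian $\bN_j$ are bounded in magnitude by $\theta\sigma_n/\sqrt{t}$ (by \Rref{floating-point}). Summing the magnitudes of the $k$ data contributions gives a term of order $kr$ and the $t$ noise contributions a term of order $t\cdot(\theta\sigma_n/\sqrt{t})$; this is exactly where the $(kr + t\theta\sigma_n)$ base of the $D$-th power in \eqref{accuracy_bound} comes from. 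Raising to the $D$-th power as dictated by \eqref{Qbound} yields the $(kr+t\theta\sigma_n)^D$ factor. Assembling $\norm{\bv_{gl}} \leq \sqrt{\tilde{D}+1}\,\lambda_{\min}^{-1}\,c\,(mne)^D\,(kr+t\theta\sigma_n)^D$ and substituting into Lemma~\ref{lem3} produces the claimed bound, modulo the multiplicative correction $(1+O(1/\sigma_n))$.

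\medskip

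\textbf{The main obstacle} I anticipate is making the bound on $\max_{g',l'}|(u(\alpha_i))_{g'l'}|$ on the unit circle precise enough to yield exactly the clean base $(kr + t\theta\sigma_n)$ rather than a looser expression, and in particular justifying the residual factor $(1+O(1/\sigma_n))$. The subtlety is that the crude triangle-inequality bound on the Lagrange interpolant would typically carry extra dependence on the Lagrange monomial magnitudes $|l_j(\alpha_i)|$, and it is the special geometry of \eqref{beta}--\eqref{alpha} (equally spaced $\beta_j$'s on a circle of radius $\beta$, $\alpha_i$'s the $N$-th roots of unity) together with the closed form in Lemma~\ref{L_j lemma} that must be invoked to keep these under control. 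The $O(1/\sigma_n)$ term most plausibly arises from a normalization: the leading behavior is governed by the dominant noise scale $\sigma_n$, and the data contribution of order $kr$ together with lower-order truncation/approximation effects appears as a relative correction that vanishes as $\sigma_n \to \infty$. Getting the asymptotic bookkeeping to collapse precisely into the stated $(1+O(1/\sigma_n))$ form, rather than an additive slack, is the step that will require the most care.
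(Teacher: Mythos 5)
Your proposal follows essentially the same route as the paper's proof: it starts from Lemma~\ref{lem3}, bounds $\norm{\bv_{gl}}$ by $\lambda_{\min}^{-1}\norm{\tilde{\bff}_{gl}}$ via the linear system \eqref{set-of-eqs}, counts monomials via $\binom{mn+D-1}{D-1}\leq (emn)^D$, and bounds the relevant entries by $kr+t\theta\sigma_n$ using the closed form of the Lagrange monomials (the paper routes this through the DFT coefficients $\ti{w}_{gl}^j$ of Lemma~\ref{Lagrange_polynomial_standard}, which is equivalent to your direct triangle-inequality argument given Lemma~\ref{L_j lemma}). Your reading of the $(1+O(\frac{1}{\sigma_n}))$ factor as the relative contribution of lower-degree terms matches how the paper invokes it in \eqref{c-bound}.
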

\begin{proof}
Note that \eqref{set-of-eqs} implies 
\be{svd-bound}
\norm{\bv_{gl}}\leq\frac{\norm{\tilde{\bff}_{gl}}}{\lambda_{\min}},
\ee
for any arbitrary set of non-straggler indices $A$. Also, \Lref{Lagrange_polynomial_standard} implies that the $j$-th entry of the DFT of $(w_{gl}^0, \hdots, w_{gl}^{k+t-1})$ is equal to $\ti{w}_{gl}^j$. Hence, we have
 \be{DFT_bound}
 \ti{w}_{gl}^j \leq kr+t\theta \sigma_n,
 \ee 
 where we used the fact that the absolute value of entries of $\bX_j$'s and $\bN_j$'s are less than $r$ and $\theta \frac{\sigma_n}{\sqrt{t}}$, respectively, as discussed in Section\,\ref{sec:System Model}. Moreover, for $i \in [N]$ one can write 
\be{c-bound}
\tilde{f}_{gl}^i 
\leq c (mne)^D (kr+t\theta \sigma_n)^D(1+O(\frac{1}{\sigma_n})),
\ee
which holds by noting that the number of total monomials of degree $D$ in $mn$ variables is equal to ${{mn+D-1}\choose{D-1}}$ and we have
$$
{{mn+D-1}\choose{D-1}}\leq (\frac{e(mn+D-1)}{D})^D\leq (emn)^D,
$$
where $e$ is the natural number. Then \eqref{c-bound} implies that
\be{bff-bound}
\norm{\tilde{\bff}_{gl}}\leq c (mne)^D \sqrt{\tilde{D}+1} (kr+t\theta \sigma_n)^D(1+O(\frac{1}{\sigma_n})),
\ee
since $\tilde{\bff}_{gl}$ has $\tilde{D}+1$ components. Substituting \eqref{bff-bound} into \eqref{svd-bound} together with the result of \Lref{lem3} complete the proof.  
\end{proof}

\Tref{accuracy} provides an upper bound on the accuracy of the outcome of ALCC with floating-point implementation for a general polynomial function $f(\cdot)$. However, the polynomial $f(\cdot)$ often has a certain structure in practice that can be leveraged to strengthen the result of \Tref{accuracy}. More specifically, we say that $f(\cdot)$ is a \emph{matrix polynomial function}, or simply a \textit{matrix polynomial}, if it can be expressed by matrix addition, multiplication, and transposition as well as addition and multiplication by a constant matrix/vector/scalar. For instance, $f(\bX)=\ba\bX\bX^t$, for some vector $\ba$, is such a matrix polynomial function. The difference between a general polynomial, defined in \eqref{polynomial-def}, and a matrix polynomial is illustrated in the following example. 
\begin{exmp}
Let $\bX= \begin{bmatrix}
x_{11}& x_{12}\\
x_{21}& x_{22}
\end{bmatrix}$. Then the function $g_1(\bX)\deff\begin{bmatrix}
x_{11}^2+x_{11}x_{12}+x_3^2 & x_1\\
x_{21}^2+x_{11}x_{22} & x_{22}^2
\end{bmatrix}$ is a polynomial function of degree $2$, as defined in \eqref{polynomial-def}, but is not a matrix polynomial function. The function $g_2(\bX)\deff\begin{bmatrix}
x_{11}^2+x_{12}^2 & x_{11}x_{21}+x_{12}x_{22}\\
x_{21}x_{11}+x_{22}x_{12} &x_{21}^2+ x_{22}^2
\end{bmatrix}=\bX\bX^t$ is a matrix polynomial function of degree $2$. Note also that the determinant of a matrix, i.e., $g_3(\bX)\deff\textit{det}(\bX)=x_{11}x_{22}-x_{12}x_{21}$ is a polynomial but not a matrix polynomial. The matrix inversion function, i.e., $g_4(\bX)\deff\bX^{-1}$ is not even a polynomial function.
\end{exmp}
   The following corollary provides  a stronger accuracy bound on the outcome of ALCC with matrix polynomial as its underlying function.
   \begin{corollary}\label{corollary}
   Let $f(\cdot)$ be a matrix polynomial function. Then, the absolute error on the entries of $f(\bX_j)$, for $j \in [k]$, in the outcome of ALCC is bounded as follows:
    \be{accuracy_bound2}
	\Delta f_{gl}^j \leq C (kr+t\theta \sigma_n)^D\kappa_{\bB} 2^{-b_m}(1+O(\frac{1}{\sigma_n})),
	\ee
	where $C\deff\overline{\beta} \frac{c \max(m,n)^D}{\lambda_{\min}} \sqrt{\tilde{D}+1} $.
   \end{corollary}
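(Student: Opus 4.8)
The plan is to reuse the entire argument of \Tref{accuracy} and change only a single step. Inspecting that proof, the factor $(mne)^D$ is introduced solely in the entrywise estimate \eqref{c-bound} on $\tilde{f}_{gl}^i=f_{gl}(u(\alpha_i))$, where it arose from counting the at most $(emn)^D$ monomials of a general degree-$D$ polynomial in the $mn$ input variables. Everything downstream of \eqref{c-bound} — the $\ell^2$ aggregation \eqref{bff-bound} that contributes the $\sqrt{\tilde{D}+1}$ factor, the minimum–singular–value bound \eqref{svd-bound}, and the combination with \Lref{lem3} — is completely insensitive to the internal structure of $f$. Hence I would only need to establish, for a matrix polynomial, the sharpened bound
$$\tilde{f}_{gl}^i \le c\,\max(m,n)^D(kr+t\theta\sigma_n)^D\bigl(1+O(\tfrac{1}{\sigma_n})\bigr),$$
and then feed it through the identical chain.

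To obtain this sharpened bound I would replace the monomial count by submultiplicativity of the entrywise magnitude under matrix multiplication. As in \Tref{accuracy}, the entries of the evaluation argument $u(\alpha_i)$ are bounded in magnitude by $M:=kr+t\theta\sigma_n$, up to the $(1+O(1/\sigma_n))$ correction coming from the truncated-Gaussian entry bound. By definition a degree-$D$ matrix polynomial is a sum of terms, each a product of at most $D$ factors that are copies of $u(\alpha_i)$ or its transpose, interleaved with constant matrices. For one such product $\bA_1\cdots\bA_D$ with entries bounded by $M$, the $(g,l)$ entry equals $\sum_{i_1,\dots,i_{D-1}}(\bA_1)_{g,i_1}\cdots(\bA_D)_{i_{D-1},l}$; since each of the $D-1$ contracted indices ranges over a dimension equal to $m$ or $n$, and hence over at most $\max(m,n)$ values, this sum has at most $\max(m,n)^{D-1}$ terms, each of magnitude at most $M^D$. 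Folding the constant matrices and the (constant) number of additive terms into the coefficient bound $c$ then gives an entry bound $c\,\max(m,n)^{D-1}M^D\le c\,\max(m,n)^{D}M^D$, which is the displayed inequality.

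Substituting this in place of \eqref{c-bound} and propagating through \eqref{bff-bound}, \eqref{svd-bound}, and \Lref{lem3} exactly as in the theorem yields
$$\Delta f_{gl}^j \le \overline{\beta}\,\frac{c\,\max(m,n)^D}{\lambda_{\min}}\sqrt{\tilde{D}+1}\,(kr+t\theta\sigma_n)^D\kappa_{\bB}2^{-b_m}\bigl(1+O(\tfrac{1}{\sigma_n})\bigr),$$
which is precisely the claim with $C=\overline{\beta}\frac{c\max(m,n)^D}{\lambda_{\min}}\sqrt{\tilde{D}+1}$. Since $\max(m,n)^D\le(mn)^D\le(mne)^D$, this is a genuine sharpening of \Tref{accuracy}.

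I expect the main obstacle to lie in the bookkeeping of the reduction rather than in any single inequality. The matrix-product estimate itself is routine, but I must verify that expanding a matrix polynomial as a sum of products keeps every intermediate dimension bounded by $\max(m,n)$ — which holds because each factor is $m\times n$ or $n\times m$, so each contracted index is $m$ or $n$ — and, most importantly, that it is index \emph{contraction}, not the formation of all $\binom{mn+D-1}{D-1}$ free monomials, that now governs the count. I would also need to confirm that the definition of $c$ as the maximum coefficient magnitude of the expanded form is consistent with absorbing the constant matrices, and that the $(1+O(1/\sigma_n))$ term is inherited unchanged, being tied to the entry bound on $u(\alpha_i)$ and not to the combinatorial factor we have improved.
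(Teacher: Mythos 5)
Your proposal is correct and follows essentially the same route as the paper: the paper's entire proof consists of asserting that the number of degree-$D$ monomials in each entry of a matrix polynomial is at most $\max(m,n)^D$ and then reusing the argument of \Tref{accuracy} verbatim, which is exactly your plan of swapping the $(emn)^D$ monomial count in \eqref{c-bound} for the matrix-specific count and propagating it through \eqref{bff-bound}, \eqref{svd-bound}, and \Lref{lem3}. Your index-contraction justification of the count (at most $\max(m,n)^{D-1}$ contracted-index terms, each of magnitude $M^D$) is actually more explicit than the paper's one-line assertion, and the caveats you raise about interleaved constant matrices apply equally to the paper's own statement.
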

   \begin{proof}
       Note that the number of $D$-degree monomials in a matrix polynomial is at most $\max(m,n)^D$. The remaining of the proof is similar to that of \Tref{accuracy}.
   \end{proof} 

\begin{remark}\label{straggler}
Note that when no stragglers are assumed, i.e., $s=0$, picking  $\alpha_i$'s in the proposed ALCC protocol according to \eqref{alpha} implies that the matrix $\bB$, defined in \eqref{B-def}, is a unitary matrix. Hence, we have $\kappa_{\bB}=1$ which is the minimum possible for the condition number $\kappa_{\bB}$. For the case of ALCC with stragglers, i.e., $s>0$, one can utilize the following upper bond on $\kappa_{\bB}$ \cite[Theorem 1]{ramamoorthy2019numerically}:
\be{kappaB_bound}
\kappa_{\bB} \leq O(\tilde{N}^{s+6}),
\ee
where $\tilde{N}$ is the smallest odd number larger than $N$. Combining \eqref{kappaB_bound} with \eqref{accuracy_bound} leads to an upper bound on the accuracy of ALCC scheme with $s$ stragglers. 
\end{remark}

\subsection{Comparisons with LCC and numerical results}
\label{three:B}

In this section we compare the accuracy of ALCC with that of LCC that employs finite field operations. In LCC the computations are preformed over a finite field of a prime size $p$, denoted by $\F_p$, and are implemented using fixed-point numbers \cite{yu2019lagrange}. In order to have a fair comparison, we assume that the number of bits that are used to represent data symbols in ALCC, that uses floating point, and LCC, that uses fixed point, are equal. Let that number be denoted by $b$. It is also assumed that $b$ is fixed throughout the implementation of the scheme. It is shown in Section\,\ref{Analytical results} how the accuracy of ALCC depends on $b$. Same as in ALCC, the accuracy of LCC also depends on $b$ as discussed next. 

In LCC the real-valued data are assumed to be first quantized and then mapped to elements of $\F_p$. Also, note that if a symbol computed during the process by one of the workers in LCC becomes larger than $p$, an incident referred to as an overflow error, then a successful recovery of the outcome of the computation can not be guaranteed. Let $\Delta$ denote the corresponding quantization step. Let also $s_{ij}$ denote the sum of the absolute values of the coefficients of the polynomial $f_{ij}$, for $i \in [u]$ and $j \in [h]$, and let $s_a\deff\max_{i,j} s_{ij}$. Then, in order to avoid overflow errors, it is required that
\be{wrap_around_criterion}
\frac{s_a}{\Delta}(\frac{r}{\Delta})^D\leq \frac{p}{2}, 
\ee
since the left hand-side of \eqref{wrap_around_criterion} corresponds to the maximum value of the polynomial $f(\cdot)$ when evaluated over the quantized data. The latter is by noting that $f_{ij}(x)\leq s_ax^D\leq s_a r^D$ for all $i,j$, the quantization step is $\Delta$, and the magnitude of the entries of $\bX$ is upper bounded by $r$. 

The next step is to characterize how large $p$ can be given the fixed number of representation bits $b$. To this end, two different scenarios can be considered regarding how the intermediate multiplications, at the worker nodes, are carried out. More specifically, the intermediate multiplications may or may not be done modulo $p$. We consider the two cases separately and provide bounds on the accuracy in both cases. Performing intermediate multiplications modulo $p$ leads, in general, to a better accuracy, as will be also shown in the remaining of this section. However, this improvement comes at the cost of increased \emph{latency} of the fixed-point implementation. This is because, in practice, performing multiplications over large finite fields require further processing, compared to the regular multiplication, and are slower than the regular multiplications.

In the first case, it is assume that the intermediate multiplications are done modulo $p$. Then in order to avoid overflow errors in multiplications while employing fixed-point implementation, it is necessary to have
\be{overflow_modular}
 p^2\leq 2^b.
\ee

In the second case, it is assumed that the underlying multiplications are done over $\Z$ and the worker nodes need to compute the result modulo $p$ only once after the polynomial evaluations are completed. In this case, the condition in \eqref{overflow_modular} is modified as follows:
\be{overflow_regular}
\frac{s_a}{\Delta}p^D\leq 2^b.
\ee

Combining \eqref{wrap_around_criterion} with \eqref{overflow_modular}, for the first case, and with  \eqref{overflow_regular}, for the second case, provides lower bounds on the absolute error of the outcomes of LCC. In particular, one must have 
 \be{modular_accuracy}
 (\frac{s_ar^D}{2^{(\frac{b}{2}-1)}})^{\frac{1}{D+1}}\leq \Delta,
 \ee
 for the first case, and
 \be{regular_accuracy}
 (\frac{s_a^{(1+\frac{1}{D})}r^D}{2^{(\frac{b}{D}-1)}})^{(\frac{D}{D^2+D+1})} \leq \Delta,
 \ee
 for the second case. 
 
Now, consider ALCC with floating-point implementation where each symbol is represented by $b$ bits. In current standard systems, $8$ bits are allocated to represent the exponent. Also, one bit is reserved for indication of zero and one bit is reserved for the sign flag. Hence, the total number of precision bits $b_m$ is equal to $b-10$. We use these parameters to plot the bounds on the accuracy of ALCC versus that of LCC. In Figure\,\ref{FLPvsFXP_acc}, the upper bound on the absolute error in ALCC with floating-point implementation, provided in \Cref{corollary}, is plotted and is compared with the lower bounds on the absolute error in LCC with fixed-point implementation, provided in \eqref{modular_accuracy} and \eqref{regular_accuracy}, for the two aforementioned cases. Note that for the experiments with the results shown in Figure\,\ref{FLPvsFXP_acc} the terms $O(\frac{1}{\sigma_n})$, that are used in the bounds provided in \Tref{accuracy} and \Cref{corollary}, are equal to zero due to the certain matrix polynomial function considered. In general, such terms can be often discarded in ALCC with general underlying polynomial functions as $\sigma_n$ considered in practice is relatively large due to privacy concerns, e.g., $\sigma_n=10^{12}$ in the considered experiments.  Note that these bounds are plotted as a function of $b$, i.e., the total number of bits reserved to represent a data symbol in both the fixed-point and the floating-point implementation while the other parameters of the system are fixed. It can be observed that for $b$ larger than a certain threshold, the upper bound derived on the error in ALCC is smaller than both of the lower bounds on the error in LCC. Since these bounds may not be tight, the actual threshold would perhaps be lower than what is shown in Figure\,\ref{FLPvsFXP_acc}.


 One can also analyze the aforementioned bounds in terms of the decay rate of the absolute error as $b$ increases. More specifically, the lower bounds on the absolute error in LCC, derived in \eqref{overflow_regular} and \eqref{modular_accuracy}, decay exponentially in $b$ with an exponent between $\frac{1}{2(D+1)}$ and $\frac{1}{D^2+D+1}$. However, the upper bound on the absolute error in ALCC, derived in \eqref{accuracy_bound}, decays exponentially in $b$ with exponent $1$. This significant improvement in accuracy comes at the expense of deviating from the perfect privacy, in an information-theoretic sense, in ALCC compared to LCC. This will be discussed in details in the next section. In particular, it will be shown that this deviation is negligible for practical purposes.

 \begin{figure}[t]
 	\vspace{1.5mm}
 	\begin{center}
 		\includegraphics[width=\linewidth]{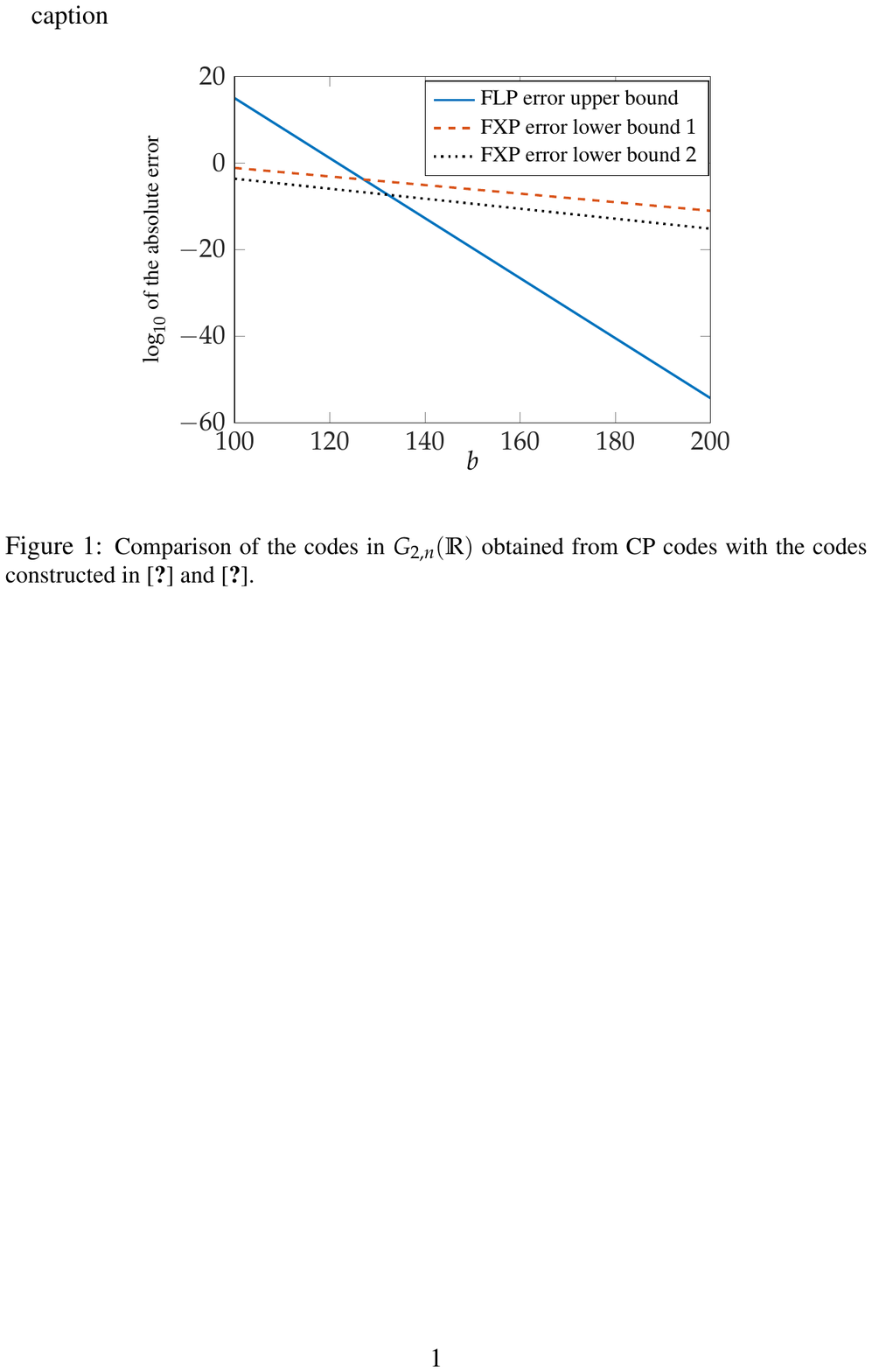}
 		\vspace{-2mm}
 		\caption{\small Comparison of upper bound on the absolute error in floating-point (FLP) implementation with lower bounds on the error in fixed-point (FXP) implementation employing conventional (FXP $1$) and (FXP $2$). The underlying function considered is $f(\bX)=\bX\bX^t$. The parameters are as follows: $ k=5, t=3, s=0,  m=n=1000, r=100, \theta=3, \sigma_n=10^{12}$. }\label{FLPvsFXP_acc}
 	\end{center}
 	\vspace{-7mm}
 \end{figure}


\section{Privacy Analysis } \label{Sec:privacy}

In this section, we analyze the privacy level of data in ALCC by considering two notions of security, namely, the mutual information security (MIS) and the distinguishing security (DS) over the continuous probability space. 

\subsection{Privacy analysis with Gaussian noise}\label{privacy_gaussian}

We first characterize the privacy of ALCC in terms of the MIS metric by utilizing existing results on the capacity of multiple-input-multiple-output (MIMO) channel. Furthermore, by using the relation between the MIS and the DS security metrics in the analog domain the privacy of ALCC is also characterized in terms of the DS metric. Such a relation is observed in the context of wiretap channels in \cite{ling2014semantically} and has been also utilized in \cite{soleymani2020privacypreserving}. 

Consider the ALCC protocol described in Section\,\ref{sec:System Model}. For $j \in [k]$ and $i \in [t]$, let $X_j$ and $N_i$ denote the $(g,l)$ element of the matrices $\bX_j$ and $\bN_i$, respectively, for some fixed $g \in [m]$ and $l \in [n]$. For the sake of clarity, $g$ and $l$ are fixed throughout this section. However, the analysis does not depend on the specific choice of $g$ and $l$.

The Lagrange polynomial introduced in \eqref{Lagrange_polynomial} is written for the fixed considered indices as follows:
\be{comp-wise Lagrange}
U(z)=\sum_{j=1}^k X_j l_j(z)+\sum_{j=k+1}^{k+t} N_{j-k} l_j(z),
\ee
where data symbols $X_j$'s are random variables with arbitrary distribution and with the range $\Bbb{D}_x\deff [-r,r]$, for $j \in [k]$, and $N_i \sim \mathcal{N}(0,\frac{\sigma_n^2}{t})$, for $i\in [t]$. Let $Y_i$ denote the corresponding entry of $\bY_i$, for $i \in [N]$. In other words, $Y_i=U(\alpha_i)$. Let $T = \{ i_1, \cdots, i_t\}$ denote the set of indices for the colluding parties. Let also $X$, $N$, and $Y_T$ denote $(X_1, \cdots, X_k)^{\text{T}}$, $(N_1, \cdots, N_t)^{\text{T}}$, and $(Y_{i_1}, \cdots, Y_{i_t})^{\text{T}}$, respectively, where $(\cdot)^{\text{T}}$ is the transpose operation. By convention, a random variable/vector is denoted by a capital letter and its instance is denoted by the corresponding lower case letter. 

The following equation relates the encoded symbols received by the colluding set of parties $T$ to the dataset symbols and the added noise symbols:
\be{matrix_form}
Y_T=\bL_T X+ \tilde{\bL}_T N,
\ee
where 
 \be{LAdeff}
\bL_T\ \deff\
\begin{bmatrix}
l_1(\alpha_{i_1}) & \hdots &l_k(\alpha_{i_1})\\
l_1(\alpha_{i_2}) & \hdots &l_k(\alpha_{i_2})\\
\vdots & \vdots &\vdots\\
l_1(\alpha_{i_t}) & \hdots &l_k(\alpha_{i_t})\\
\end{bmatrix}_{t\times k} ,
\ee
and
\be{LAtdeff}
\tL_T\ \deff\
\begin{bmatrix}
l_{k+1}(\alpha_{i_1}) & \hdots &l_{k+1}(\alpha_{i_1})\\
l_{k+2}(\alpha_{i_2}) & \hdots &l_{k+2}(\alpha_{i_2})\\
\vdots & \vdots &\vdots\\
l_{k+t}(\alpha_{i_t}) & \hdots &l_{k+t}(\alpha_{i_t})\\
\end{bmatrix}_{t\times t} .
\ee
The amount of information revealed to the set of colluding parties can be measured in terms of the MIS metric, denoted by $\eta_c$, defined as follows:
\be{MIS_def}
\eta_c \deff \max_{\substack{T}} \max_{\substack{P_X:|X_j|<r, \forall j \in [k]}} I(Y_T; X),
\ee
where $P_X$ is the probability density function (PDF) of $X$ and the maximization is taken over all $T \subset [N]$ with $|T|=t$. Since $|X_j|\leq r$, we have $E[X_j]^2\leq r^2$. Then, one can write
\be{etac_upperbound}
\eta_c \leq \max_{\substack{T}}\max_{\substack{P_X: E[X_j^2]\leq r^2}} I(X;Y_T).
\ee

Next, we characterize the right hand side of \eqref{etac_upperbound} in terms of other parameters of the system. To this end, the capacity results of MIMO channels are utilized as discussed next. 
Consider a MIMO channel with $k$ transmit and $t$ receive antennas and the input-output relation 
\be{MIMO_model}
\by=\bH\bx+\bn,
\ee
where $\bx$ and $\by$ are the $k\times 1$ transmitted signal and the $t \times 1$ received signal vectors, respectively, $\bH_{t \times k}$ represent the channel gain matrix known to both the transmitter and the receiver, and $\bn_{t\times 1}$ is an additive zero-mean Gaussian noise vector. Let $\bN_c$ denote the noise correlation matrix, i.e., the covariance matrix of the vector $\bn$. By using the results on the capacity of MIMO channel with equal-power allocation constraint and correlated noise, one can get an upper bound on the right-hand side of \eqref{etac_upperbound}. The capacity of this MIMO channel, under equal-power allocation constraint, is well-known and is expressed as follows \cite[IV-A]{schumacher2002antenna}:
\be{MIMO-capacity}
	C=\log_2|\bI_t+P\bN_c^{-1}\bH\bH^H|,
\ee
where $P$ is the maximum transmission power of each antenna at the transmitter side, $\bI_t$ is the $t \times t$ identity matrix and  $|\cdot|$ denotes matrix determinant.

\begin{theorem}
\label{mic_bound_thm}
In the proposed ALCC, the MIS metric $\eta_c$, defined in \eq{MIS_def}, is upper bounded as follows:
\be{etac_MIMO}
	\eta_c \leq \max_{\substack{T}} \log_2|\bI_t+\frac{r^2t}{\sigma_n^2}\tilde{\bSig}^{-1}_T\bSig_T|,
	\ee
	where $\tilde{\bSig}_T\deff \tL_T\tL_T^H$ and $\Sigma_T\deff\bL_T\bL_T^H$. Also, $\tilde{\bSig}_T$ and $\tL_T$ are specified in \eq{LAdeff} and \eq{LAtdeff}, respectively. 
\end{theorem}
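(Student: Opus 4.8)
The plan is to recognize the colluding parties' observation \eqref{matrix_form} as an instance of the Gaussian MIMO channel \eqref{MIMO_model} and then invoke the equal-power capacity formula \eqref{MIMO-capacity}. Starting from the reduction already established in \eqref{etac_upperbound}, I would fix a colluding set $T$ and bound $\max_{P_X:E[X_j^2]\le r^2} I(X;Y_T)$ for that set, taking the maximum over $T$ only at the very end. The identification is $\bx \leftrightarrow X$ (so there are $k$ transmit antennas), $\by \leftrightarrow Y_T$ ($t$ receive antennas), channel matrix $\bH \leftrightarrow \bL_T$ from \eqref{LAdeff}, and effective (colored) noise $\bn \leftrightarrow \tL_T N$ built from \eqref{LAtdeff}.

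First I would compute the noise correlation matrix $\bN_c$. Since the entries of $N$ are i.i.d.\ $\mathcal{N}(0,\frac{\sigma_n^2}{t})$ and $\tL_T$ is deterministic, the noise $\bn=\tL_T N$ is zero-mean Gaussian with $\bN_c=\mathrm{Cov}(\tL_T N)=\frac{\sigma_n^2}{t}\tL_T\tL_T^H=\frac{\sigma_n^2}{t}\tilde{\bSig}_T$, which is exactly the quantity appearing in the statement; correspondingly $\bN_c^{-1}=\frac{t}{\sigma_n^2}\tilde{\bSig}_T^{-1}$. This is the determinant-algebra step that ultimately produces $\tilde{\bSig}_T^{-1}\bSig_T$ inside the log-determinant.

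Next I would pass from the arbitrary data distribution to a Gaussian one. Writing $I(X;Y_T)=h(Y_T)-h(Y_T\mid X)=h(Y_T)-h(\bn)$ (using $N\perp X$) and applying the max-entropy inequality $h(Y_T)\le \log_2\bigl((\pi e)^t|\mathrm{Cov}(Y_T)|\bigr)$ for the circularly symmetric complex convention fixed in the System Model, yields the log-det bound $I(X;Y_T)\le \log_2|\bI_t+\bN_c^{-1}\bL_T \bQ_X\bL_T^H|$, where $\bQ_X=\mathrm{Cov}(X)$. The second-moment constraint $E[X_j^2]\le r^2$ caps each diagonal entry of $\bQ_X$ by $r^2$, i.e.\ it corresponds to equal per-antenna transmit power $P=r^2$. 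Substituting $P=r^2$, $\bH=\bL_T$, and $\bN_c=\frac{\sigma_n^2}{t}\tilde{\bSig}_T$ into \eqref{MIMO-capacity} gives $\log_2|\bI_t+\frac{r^2 t}{\sigma_n^2}\tilde{\bSig}_T^{-1}\bSig_T|$, and maximizing over $T$ delivers the claimed bound.

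The hard part will be the last substitution: justifying that the equal-power Gaussian input is the worst case, i.e.\ that the optimization over admissible $P_X$ may be replaced by the covariance $r^2\bI_k$ rather than by some correlated covariance sharing the same diagonal. The moment constraint $E[X_j^2]\le r^2$ by itself bounds only the diagonal of $\bQ_X$, so this reduction is exactly where the proof must lean on the equal-power MIMO capacity characterization of \cite{schumacher2002antenna} invoked through \eqref{MIMO-capacity}; everything else (the max-entropy bound and the identification of $\bN_c$, $\bL_T$, $\tL_T$ with $\tilde{\bSig}_T$ and $\bSig_T$) is mechanical once the channel mapping is fixed.
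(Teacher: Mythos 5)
Your proposal is correct and takes the same route as the paper: the paper's entire proof is the identification of \eqref{matrix_form} with the MIMO model \eqref{MIMO_model} (effective noise $\tL_T N$, hence $\bN_c=\frac{\sigma_n^2}{t}\tL_T\tL_T^H=\frac{\sigma_n^2}{t}\tilde{\bSig}_T$) followed by an appeal to the equal-power capacity formula \eqref{MIMO-capacity}, stated in two sentences rather than worked out, and your substitution $P=r^2$, $\bH=\bL_T$ reproduces \eqref{etac_MIMO} exactly. The subtlety you flag at the end is genuine --- $E[X_j^2]\le r^2$ constrains only the diagonal of $\mathrm{Cov}(X)$, so a correlated input need not satisfy $\mathrm{Cov}(X)\preceq r^2\bI_t$ and the reduction of the log-det bound to the equal-power formula is not automatic --- but the paper's own proof is silent on this point as well, so your writeup matches (and in detail exceeds) the argument actually given.
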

\begin{proof}
Note that in \eqref{matrix_form} the term $\tL_T N$ can be regarded as the noise vector and, consequently, \eqref{matrix_form} can be turned into an equation similar to \eq{MIMO_model} describing the MIMO channel model. Hence, by using this observation together with \eqref{MIMO-capacity} and the definition of capacity, \eqref{etac_upperbound} leads to \eq{etac_MIMO}.
\end{proof}

\begin{corollary}
For $r=o(\sigma_n)$, we have
	\be{etac_MIMO_approx}
	\eta_c \leq \frac{1}{\ln(2)}\max_{\substack{T}}\text{tr}(\tilde{\bSig}^{-1}_T\bSig_T)\frac{r^2t}{\sigma_n^2}+o(\frac{r^2}{\sigma_n^2}).
	\ee 
\end{corollary}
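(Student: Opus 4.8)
The plan is to start from the exact bound established in \Tref{mic_bound_thm}, namely
\be{plan-start}
\eta_c \leq \max_{T} \log_2\abs{\bI_t+\frac{r^2t}{\sigma_n^2}\tilde{\bSig}^{-1}_T\bSig_T},
\ee
and to analyze its asymptotic behavior in the regime $r=o(\sigma_n)$, i.e., as the ratio $r^2/\sigma_n^2 \to 0$. The key observation is that the argument of the determinant is of the form $\bI_t + \epsilon \bM$, where $\epsilon \deff \frac{r^2 t}{\sigma_n^2}$ is a small parameter and $\bM \deff \tilde{\bSig}^{-1}_T\bSig_T$ is a fixed matrix (for a given $T$) that does not depend on $\epsilon$. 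The task therefore reduces to a first-order Taylor expansion of $\log_2\abs{\bI_t + \epsilon\bM}$ in the small parameter $\epsilon$.

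First I would recall the standard identity relating the log-determinant to the trace via the eigenvalues: if $\mu_1,\dots,\mu_t$ are the eigenvalues of $\bM$, then the eigenvalues of $\bI_t+\epsilon\bM$ are $1+\epsilon\mu_i$, so that
\be{plan-logdet}
\log_2\abs{\bI_t+\epsilon\bM}=\frac{1}{\ln 2}\sum_{i=1}^{t}\ln(1+\epsilon\mu_i).
\ee
Then I would apply the scalar expansion $\ln(1+x)=x+O(x^2)$ to each summand, which gives
\be{plan-expand}
\log_2\abs{\bI_t+\epsilon\bM}=\frac{1}{\ln 2}\sum_{i=1}^{t}\parenv{\epsilon\mu_i+O(\epsilon^2\mu_i^2)}=\frac{\epsilon}{\ln 2}\,\text{tr}(\bM)+O(\epsilon^2).
\ee
Substituting back $\epsilon=\frac{r^2 t}{\sigma_n^2}$ and $\bM=\tilde{\bSig}^{-1}_T\bSig_T$, and pulling the maximization over $T$ through the leading-order term, yields exactly the claimed bound \eqref{etac_MIMO_approx}, with the error term $O(\epsilon^2)=o(r^2/\sigma_n^2)$ as required since $r=o(\sigma_n)$.

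The main obstacle, and the point requiring the most care, is justifying that the expansion is uniform enough for the $O(\cdot)$ notation to be legitimate and for the maximization over $T$ to interchange cleanly with the asymptotics. Since there are only finitely many subsets $T\subset[N]$ of size $t$, and for each the matrix $\bM=\tilde{\bSig}^{-1}_T\bSig_T$ has bounded eigenvalues (the Lagrange-monomial matrices $\bL_T,\tL_T$ and their Gram matrices $\bSig_T,\tilde{\bSig}_T$ are fixed, with $\tilde{\bSig}_T$ invertible), the constants hidden in $O(\epsilon^2)$ can be bounded uniformly over the finite collection of $T$'s. I would therefore note explicitly that $\tilde{\bSig}_T$ is nonsingular so that $\bM$ is well-defined and its spectrum bounded, and that taking the maximum over a finite index set preserves the $o(r^2/\sigma_n^2)$ remainder. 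One subtlety worth flagging is that the leading-order trace term is real and nonnegative (being, up to scaling, a mutual-information quantity), so the bound is meaningful; this follows because $\bSig_T$ and $\tilde{\bSig}_T$ are Hermitian positive (semi)definite Gram matrices, making $\text{tr}(\tilde{\bSig}^{-1}_T\bSig_T)$ real and nonnegative. With these points settled, the remainder of the argument is the routine first-order expansion sketched above.
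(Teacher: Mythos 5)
Your proposal is correct and follows essentially the same route as the paper: the paper's proof likewise starts from the bound in Theorem \ref{mic_bound_thm} and applies the first-order expansions $\abs{\bI_t+\epsilon \bA} = 1+\epsilon\,\text{tr}(\bA)+o(\epsilon)$ and $\log_2(1+\epsilon)=\frac{\epsilon}{\ln 2}+o(\epsilon)$, which is the same first-order Taylor expansion of the log-determinant that you carry out via the eigenvalues of $\tilde{\bSig}^{-1}_T\bSig_T$. Your additional remarks on uniformity over the finitely many sets $T$ and on the invertibility and positive semidefiniteness of the Gram matrices are sound refinements of the same argument, not a different approach.
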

\begin{proof}
The proof is by utilizing 
$$
|\bI_t+\epsilon \bA| =1+ \epsilon \text{tr}(\bA)+ o(\epsilon)
$$
together with
$$
\log_2 (1+\epsilon)=\frac{\epsilon}{\ln(2)}+o(\epsilon)
$$
in the upper bound presented in \Tref{mic_bound_thm}. 
\end{proof}

Next, we characterize the privacy of ALCC in terms of the DS metric. The DS metric is defined using the notion of the \emph{total variation} (TV) distance $D_{TV}(.,.)$. In general, for any two probability measures $P_1$ and $P_2$ on a $\sigma$-algebra $\cF$, the TV distance is defined as $D_{TV}(P_1,P_2)\deff \sup_{B \in \cF} |P_1(B)-P_2(B)|$. While DS metric is often defined for discrete random variables in the cryptography literature, it can be also extended to the case of continuous random variables \cite{soleymani2020privacypreserving}. In particular, in the proposed ALCC protocol $\eta_s$ is defined as as follows:
\be{DS_security}
\eta_s\deff \max_{\substack{T}} \max_{\bx_1,\bx_2 \in \Bbb D_X} D_{\text{TV}}(P_{Y_T|X=\bx_1},P_{Y_T|X=\bx_2}),
\ee
where $\Bbb D_X = [-r,r]^k$ is the support of $X$. Note that, roughly speaking, a smaller value for $\eta_s$ implies data is kept more private against any set of $t$ colluding parties. 

Next, we discuss the privacy guarantee for ALCC in terms of the DS metric $\eta_s$. This is done by utilizing relations between $\eta_c$ and $\eta_s$ and the upper bound on $\eta_c$ derived in \Tref{mic_bound_thm}. 

Relations between MIS and DS metrics was first established in \cite{bellare2012cryptographic} though for discrete random variables. In particular, it is shown in \cite{bellare2012cryptographic} that:
\be{eta_relation}
\eta_s \leq \sqrt{2\eta_c},
\ee
 assuming all underlying random variables are discrete. This result is also extended to the analog domain in \cite{ling2014semantically}. In other words, it is shown that \eqref{eta_relation} also holds when the underlying random variables are continuous. Then, combining \eqref{etac_MIMO} with \eqref{eta_relation} yields the following upper bound on the DS metric $\eta_s$:
 \be{eta_s}
 \eta_s \leq \sqrt{2 \max_{\substack{T}} \log_2|\bI_t+\frac{r^2t}{\sigma_n^2}\tilde{\bSig}^{-1}_T\bSig_T|}.
 \ee
 In particular, for $r=o(\sigma_n)$, we have
\be{eta_s approx}
\eta_s \leq \sqrt{\frac{2t}{\ln(2)}\max_{\substack{T}}\text{tr}(\tilde{\bSig}^{-1}_T\bSig_T)} \frac{r}{\sigma_n}+o(\frac{r}{\sigma_n}).
\ee

 \begin{figure*}[h!]
 	\begin{minipage}{0.49\textwidth}
 		\centering
 		\includegraphics[width=\linewidth]{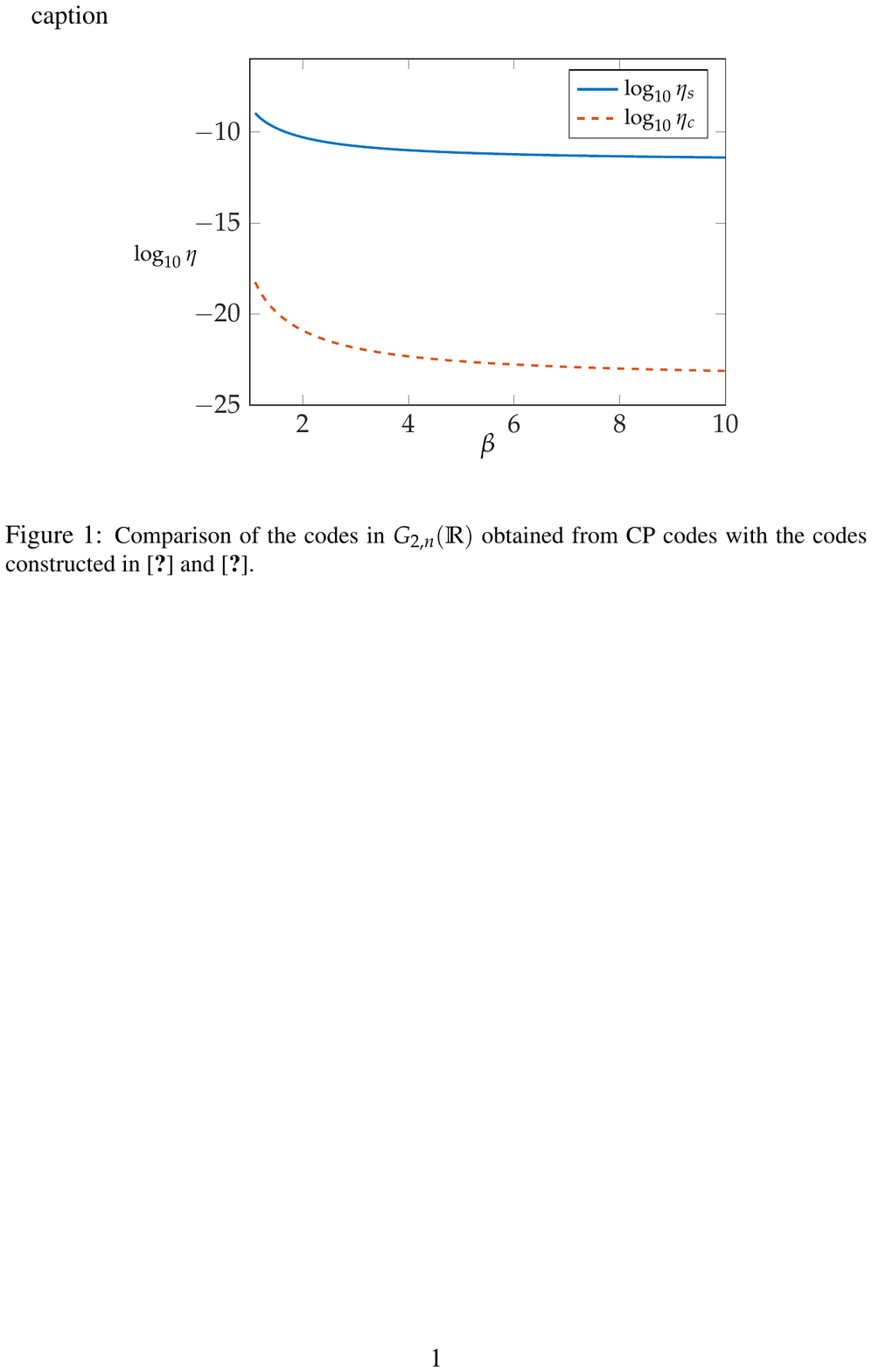}
 		\caption{\small Upper bounds on $\eta_s$ and $\eta_c$ for $N=15, k=4, t=4, \sigma_n =10^{23}, r=10^{10}$.  }\label{etas}
 	\end{minipage}\hfill
 	\begin{minipage}{0.45\textwidth}
 		\vspace{1.5mm}
 		\centering
 		\includegraphics[width=\linewidth]{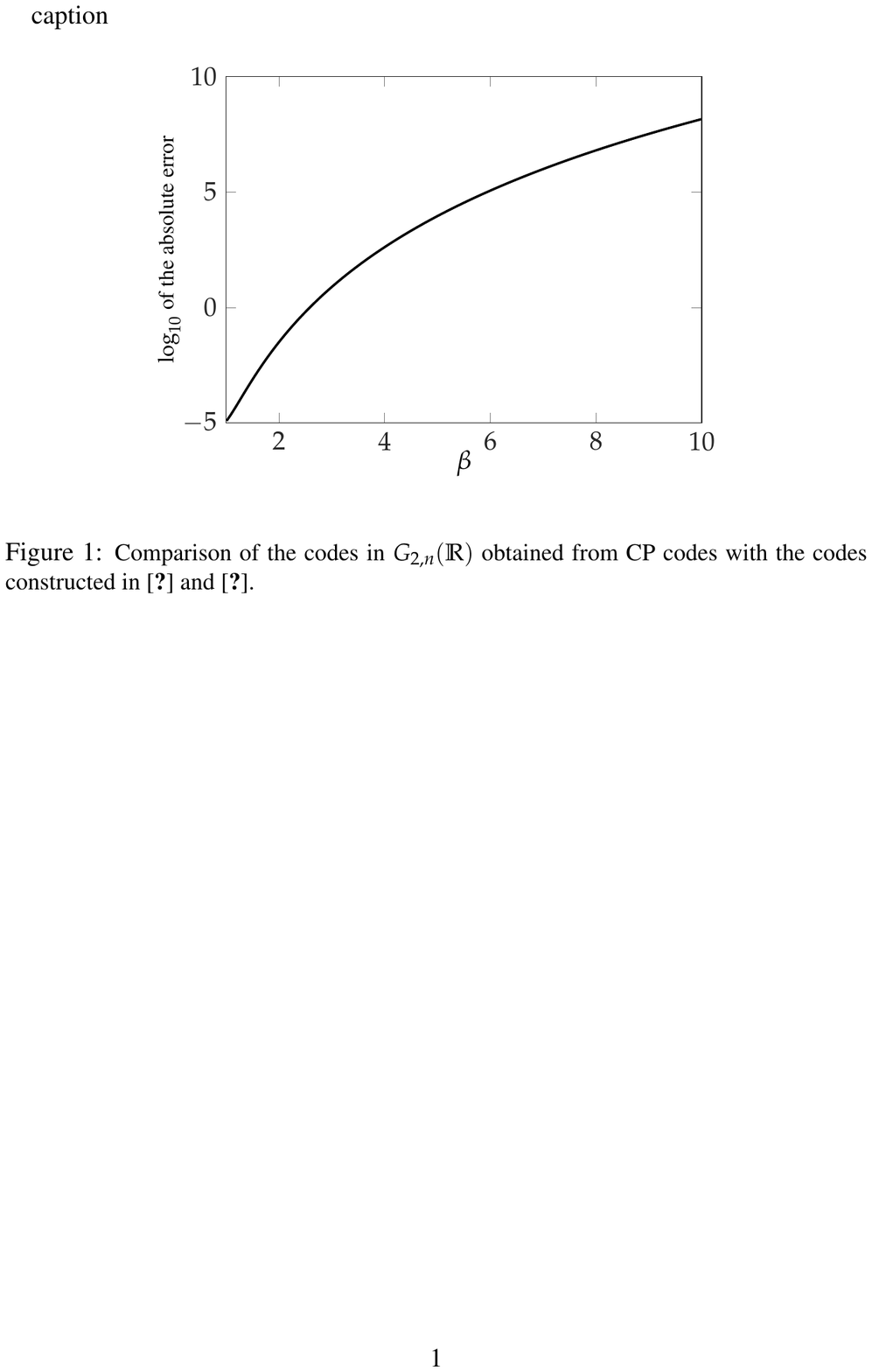}
 		\caption{\small Upper bound on the accuracy of ALCC versus $\beta$ for $D=2, k=4, t=4, s=0, c=1, m=n=1000, r=10^{10}, \theta=3, \sigma_n=10^{23}$ and $v=200$. }\label{FLP_acc}
 	\end{minipage}
 \end{figure*}

 \begin{remark}\label{acc-pri-trade-off}
Note that both \eqref{etac_MIMO} and \eqref{eta_s} imply that increasing the standard deviation of the added noise, i.e., $\sigma_n$, while other parameters of ALCC are fixed, improves bounds on the privacy level of ALCC. However, this improvement comes at the expense of degrading the accuracy of the outcome of ALCC, according to \Tref{accuracy}. This exhibits a fundamental trade-off between the accuracy and privacy of ALCC. Such a trade-off between accuracy and privacy in the analog domain has been observed for the first time in \cite{soleymani2020privacypreserving} for a privacy-preserving distributed computing setup. 
\end{remark}

Next, the provided upper bounds on the maximum amount of information revealed about the dataset to a subset $T$ of colluding parties with size $t$ are numerically evaluated. This is done for both the MIS security metric, bounded in \eqref{etac_MIMO}, as well as the DS metric, bound in \eqref{eta_s}, for a certain set of parameters and the results are shown in Figure \ref{etas}.  Both $\eta_s$ and $\eta_c$ are plotted versus $\beta$. It can be observed that both the bounds are decreased by increasing $\beta$. In other words, this indicates that increasing $\beta$, in general, leads to enhancement in the privacy of the ALCC protocol. However, the provided upper bound on the accuracy of the outcome of ALCC, provided in \eqref{accuracy_bound} and \eqref{accuracy_bound2}, implies that the precision loss would also grow by increasing $\beta$. The upper bound on the absolute error in ALCC with general underlying matrix polynomial function is plotted versus $\beta$ in Figure\,\ref{FLP_acc} for a certain set of parameters.  Note that, same as in  the experiments with results demonstrated in Figure\,\ref{FLPvsFXP_acc}, the terms $o(\frac{1}{\sigma_n})$ are discarded in the plot in Figure\,\ref{FLP_acc} as well since $\frac{1}{\sigma_n}=10^{-23}$ is negligible.  This together with the plot in Figure\,\ref{etas} demonstrates a new fundamental trade-off between the accuracy and the privacy of the ALCC protocol which is specific to ALCC and is controlled by the choice of $\beta$. It can be also observed from Figure\,\ref{etas} and Figure\,\ref{FLP_acc} that a reasonable value for $\beta$, e.g., $\beta = 1.5$, can be picked for which the upper bounds on $\eta_s$ and $\eta_c$ are reasonably low, e.g., $\sim 10^{-10}$ and $\sim 10^{-20}$, respectively, while the upper bound on the error in the outcome is reasonable for practical purposes, e.g., $\sim 10^{-3}$.

\subsection{Privacy analysis with truncated noise}\label{truncated_noise}
The results presented in Section\,\ref{privacy_gaussian} are derived assuming that the entries of the noise matrices $\bN_i$'s in \eqref {Lagrange_polynomial} are drawn from a complex circular-symmetric Gaussian distribution. However, as discussed in Section\,\ref{sec:Accuracy}, these noise terms should be bounded for practical implementations. In other words, the actual PDF of the noise terms is a properly scaled version of the Gaussian PDF truncated between $-\theta \frac{\sigma_n}{\sqrt{t}}$ and $\theta \frac{\sigma_n}{\sqrt{t}}$, for some $\theta \in \R$. Roughly speaking, we say that the noise terms are truncated. In this section, we extend the results on bounding $\eta_s$ to the case with the noise terms being truncated. We use the upper bound on the DS metric in a similar setup with truncated complex Gaussian noise \cite{soleymani2020privacypreserving} that involves the following quantity:
	\be{mean_dif}
d_{\text{mean}}\,\deff\,\max_{\substack{T}} \max_{\bx_1,\bx_2 \in \Bbb D_X}|\bL_T(\bx_1-\bx_2)|.
\ee 
Note that conditioned on $X=\bx$ in \eqref{matrix_form}, $Y_T$ is a complex Gaussian vector with mean $\bL_T \bx$. Then the parameter $d_{\text{mean}}$, defined in \eqref{mean_dif}, is the maximum Euclidean distance between the means of any two such conditional random vectors in the $t$-dimensional complex vector space, where the maximum is taken over the set of all colluding sets $T$ with size $t$ and all $\bx_1, \bx_2$ in the range of the random vector $X$. In the following lemma an upper bound on $d_{\text{mean}}$ is obtained by using the alternative characterization of Lagrange monomials derived in \Lref{L_j lemma}.

\begin{lemma}\label{d-upperbound}
	The parameter $d_{\text{mean}}$, defined in \eqref{mean_dif}, is upper bounded as follows:
	\be{d-bound}
	d_{\text{mean}}\leq \frac{kr}{k+t}\frac{(\frac{1}{\beta})^{k+t}-1}{(\frac{1}{\beta}) -1}.
	\ee
\end{lemma}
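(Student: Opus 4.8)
The plan is to reduce the desired bound on $d_{\text{mean}}$ to a single uniform bound on the magnitude of a Lagrange monomial evaluated at a point on the unit circle, and then to assemble the $k$ contributions. Writing $\boldsymbol{d} = \bx_1 - \bx_2$, the quantity inside the maximization in \eqref{mean_dif} is the norm of the vector $\bL_T \boldsymbol{d}$, whose $a$-th coordinate is $\sum_{b=1}^{k} l_b(\alpha_{i_a})\,d_b$, with $\bL_T$ as in \eqref{LAdeff}. Since $\bx_1,\bx_2 \in \Bbb D_X = [-r,r]^k$, each coordinate $d_b$ is controlled by $r$ (the data range enters here), so by the triangle inequality it suffices to produce a single uniform constant $M$ with $|l_b(\alpha_{i_a})| \le M$ for all $b \in [k]$ and all evaluation points $\alpha_{i_a}$, and then collect a factor $k r$.

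The crucial step is the evaluation of $M$, and this is exactly where \Lref{L_j lemma} is essential. The product form \eqref{Lagrange_monomials} of $l_j$ is ill-suited for bounding the modulus, because the denominator factors $\beta_j-\beta_l$ have differing magnitudes; the closed form \eqref{Lagrange_monomial_statndard}, namely $l_j(z) = \frac{1}{k+t}\sum_{l=0}^{k+t-1}(z/\beta_j)^l$, removes this difficulty entirely. Evaluating at $z=\alpha_{i_a}$ and using that $\alpha_{i_a}=\gamma^{i_a-1}$ lies on the unit circle while $\beta_j=\beta\,\omega^{j-1}$ has modulus $\beta$, we obtain $|\alpha_{i_a}/\beta_j| = 1/\beta$ \emph{uniformly} in $i_a$ and $j$. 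Hence, by the triangle inequality followed by the geometric-series identity,
\be{d-plan-mono}
|l_j(\alpha_{i_a})| \le \frac{1}{k+t}\sum_{l=0}^{k+t-1}\left(\frac{1}{\beta}\right)^{l} = \frac{1}{k+t}\,\frac{(1/\beta)^{k+t}-1}{(1/\beta)-1} \deff M,
\ee
which is the sought uniform bound and is precisely the per-entry factor appearing in \eqref{d-bound}.

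Combining the two steps, each coordinate of $\bL_T\boldsymbol{d}$ is at most $k r M$, and since this holds for every colluding set $T$ and every admissible pair $\bx_1,\bx_2$, taking the maxima yields the claimed estimate \eqref{d-bound}. The genuine conceptual obstacle is the passage from the product form to the summation form of the Lagrange monomials; once \Lref{L_j lemma} supplies the closed form, the uniformity $|\alpha_i/\beta_j|=1/\beta$ afforded by placing both $\alpha_i$ and $\beta_j$ on circles centered at the origin makes the modulus and geometric-sum computations routine. The remaining point requiring care is the aggregation of the $t$ coordinates of $\bL_T\boldsymbol{d}$ into the single scalar bound of \eqref{d-bound}; I would handle this by the coordinatewise estimate above, noting that $M$ coincides with the per-entry factor already used in the accuracy analysis, which is what ties this lemma back to the earlier development.
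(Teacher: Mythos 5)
Your proposal follows essentially the same route as the paper's proof: a triangle-inequality reduction to a coordinatewise bound on $\sum_b l_b(\alpha_{i_a})d_b$, followed by the closed form of \Lref{L_j lemma} together with $|\alpha_i|=1$ and $|\beta_j|=\beta$ to bound each $|l_j(\alpha_i)|$ by the geometric sum $\frac{1}{k+t}\frac{(1/\beta)^{k+t}-1}{(1/\beta)-1}$. One remark: just like the paper's own derivation (which writes the factors $2$ and $\sqrt{t}$ in an intermediate step and then silently discards them), your aggregation drops the factor $2$ coming from $|d_b|\le 2r$ and the factor $\sqrt{t}$ needed to pass from a per-coordinate bound to the Euclidean norm of the $t$-dimensional vector $\bL_T(\bx_1-\bx_2)$, so both arguments establish the stated bound only up to the constant $2\sqrt{t}$.
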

\begin{proof}
	For all $\bx_1, \bx_2 \in \Bbb D_X$ and $T \subset [N]$ with $|T|=t$, we have
	
	\begin{align}
		|\bL_T(\bx_1-\bx_2)|&\leq |\bL_T\bx_1|+|\bL_T\bx_1|\\
		&\leq 2\max_{\bx \in \Bbb D_X}|\bL_T\bx|\label{one}\\
		&\leq 2\sqrt{t}\max_{\substack{\alpha_i:\\i \in [N]}}\max_{ \substack{x_1, \cdots, x_k:\\ |x_i|\leq r} } \sum_{j=1}^k x_j l_j(\alpha_i)\label{two} \\
		&\leq kr \max_{\substack{\alpha_i:\\i \in [N]}} |l_j(\alpha_i) |\label{three}\\
		&\leq \frac{kr}{k+t}\frac{(\frac{1}{\beta})^{k+t}-1}{(\frac{1}{\beta}) -1},\label{four}
	\end{align}
	where \eqref{two} is by the definition of $\bL_T$ in \eqref{LAdeff} and noting that $\bL_T\bx$ is a $t$-dimensional vector, \eqref{three} holds by noting that $|x_i|<r$ and the summation has $k$ terms, and \eqref{four} is by $|x_i|<r$, upper bounding $|l_j(\alpha_i) |$ by \eqref{Lagrange_monomial_statndard} and noting that $|\alpha_i|=1$ for all $i$.
\end{proof}

Let $\eta_s'$ denote the DS metric for the case where the noise terms in \eqref{Lagrange_polynomial} are truncated. The following theorem provides an upper bound on $\eta_s'$ in terms of $\eta_s$, the upper bound on $d_{\text{mean}}$, and other parameters of the ALCC protocol.
\begin{theorem}\label{truncation_bound}
	The DS metric, defined in \eqref{DS_security}, for the case where the entries of $\bN_i$'s in \eqref{Lagrange_polynomial} are drawn from a truncated complex Gaussian distribution with truncation level $\theta \frac{\sigma_n}{\sqrt{t}}$ satisfies the following inequality:
	$$
	\eta'_s \leq \frac{1}{w} \eta_s+\frac{1}{w}(2\exp(-\frac{1}{2}(\theta-\frac{\overline{d}_{\text{mean}}\sqrt{t}}{\sigma_n})^2))^t\label{gaussian_tail_bound_2},
	$$
	where $w= (1-2\exp(-\frac{\theta^2}{2}))^t$ and 	
	$$
	\overline{d}_{\text{mean}}\,\deff\, \frac{kr}{k+t}\frac{(\frac{1}{\beta})^{k+t}-1}{(\frac{1}{\beta}) -1}.
	$$
\end{theorem}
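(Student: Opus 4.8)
The plan is to condition on the data vector $X=\bx$ and exploit that, in both the Gaussian model of Section\,\ref{privacy_gaussian} and the truncated model, the conditional law of $Y_T$ given $X=\bx$ is governed by the same mean $\bL_T\bx$ and the same noise map $N\mapsto\tL_T N$; only the law of $N$ differs. Writing $f_{\bx}$ for the complex Gaussian conditional density of $Y_T$ in the untruncated model and $g_{\bx}$ for the conditional density when the entries of $N$ are restricted to the truncation box $\mathcal{B}\deff\{n:|n_i|\le\theta\sigma_n/\sqrt{t}\}$ and renormalized, one has $g_{\bx}=f_{\bx}/Z$ on the support region $R_{\bx}$ obtained by pushing $\mathcal{B}$ through $n\mapsto\bL_T\bx+\tL_T n$, and $g_{\bx}=0$ elsewhere, where $Z\deff\Pr[N\in\mathcal{B}]$ is the common normalizing constant. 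Thus $\eta_s'$ compares the $g_{\bx}$'s while $\eta_s$ compares the $f_{\bx}$'s, and the task is to transfer a bound on the latter to the former at the cost of the renormalization $1/Z$ plus a boundary term.

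First I would set up the core decomposition of the total variation distance. After the $\bx$-dependent bijection $y\mapsto n=\tL_T^{-1}(y-\bL_T\bx_1)$, which leaves $D_{\text{TV}}$ invariant, the two truncated conditionals become a truncated Gaussian centred at $0$ and one centred at $-\delta$, where $\delta\deff\tL_T^{-1}\bL_T(\bx_1-\bx_2)$; equivalently I compare $p_{\text{tr}}(\cdot)$ with $p_{\text{tr}}(\cdot+\delta)$, with $p_{\text{tr}}=\frac1Z p\,\mathbf 1_{\mathcal{B}}$ and $p$ the untruncated density of $N$. Splitting $\int|p_{\text{tr}}(n)-p_{\text{tr}}(n+\delta)|\,dn$ over the common support $\mathcal{B}\cap(\mathcal{B}-\delta)$ and its complement produces three pieces: on the common part the integrand is $\frac1Z|p(n)-p(n+\delta)|$, integrating to at most $\frac{2}{Z}D_{\text{TV}}(p(\cdot),p(\cdot+\delta))=\frac{2}{Z}D_{\text{TV}}(f_{\bx_1},f_{\bx_2})\le\frac{2}{Z}\eta_s$ (using invariance of $D_{\text{TV}}$ again, now for the full Gaussians), while the two boundary pieces are each bounded by $\frac1Z\Pr[N+\delta\notin\mathcal{B}]$. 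This already yields $\eta_s'\le\frac1Z\eta_s+\frac1Z\Pr[N+\delta\notin\mathcal{B}]$, matching the structure of the claimed bound.

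It then remains to supply the two estimates. For the normalizer I would lower-bound $Z=\prod_{i=1}^t\Pr[|N_i|\le\theta\sigma_n/\sqrt t]$ by applying the standard tail estimate $Q(\theta)\le\frac12e^{-\theta^2/2}$ componentwise, with a union bound over the real and imaginary parts of each $N_i$, giving $Z\ge(1-2e^{-\theta^2/2})^t=w$ and hence $\frac1Z\le\frac1w$. For the boundary term I would control $\Pr[N+\delta\notin\mathcal{B}]$ by a product/union over the $t$ components: the displacement seen by $Y_T$ has magnitude at most $d_{\text{mean}}=\max_{T,\bx_1,\bx_2}|\bL_T(\bx_1-\bx_2)|$, which by \Lref{d-upperbound} is at most $\overline{d}_{\text{mean}}$; measured in units of the per-component noise standard deviation $\sigma_n/\sqrt t$ this is at most $\overline{d}_{\text{mean}}\sqrt t/\sigma_n$, so a component can leave the box only by deviating at least $\theta-\overline{d}_{\text{mean}}\sqrt t/\sigma_n$ standard deviations. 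Applying the shifted Gaussian tail bound componentwise produces the factor $2\exp(-\tfrac12(\theta-\overline{d}_{\text{mean}}\sqrt t/\sigma_n)^2)$, and its $t$-th power over all components is exactly the second term of the theorem.

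The main obstacle I anticipate is the boundary estimate: one must argue that the truncated supports $R_{\bx_1}$ and $R_{\bx_2}$ fail to coincide only through the mean displacement $\bL_T(\bx_1-\bx_2)$, and then control the resulting mismatch by a Gaussian tail that has already spent $\overline{d}_{\text{mean}}\sqrt t/\sigma_n$ of its $\theta$ budget; keeping the per-component bookkeeping consistent with the aggregate quantity $d_{\text{mean}}$ rather than the noise-space shift $\delta$ is the delicate point, and it is precisely here that \Lref{d-upperbound} enters. The remaining steps are routine substitutions, after which collecting the factor $1/w$ and the tail term completes the proof; alternatively, one may simply invoke the general truncated-noise DS bound of \cite{soleymani2020privacypreserving} and substitute $\overline{d}_{\text{mean}}$ from \Lref{d-upperbound} together with $\eta_s$ from \eqref{eta_s}.
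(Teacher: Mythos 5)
Your proposal is correct and takes essentially the same approach as the paper: the paper's entire proof is the one-line combination of the truncated-noise DS bound of \cite{soleymani2020privacypreserving} with \Lref{d-upperbound}, which is exactly the fallback you give in your final sentence. Your longer reconstruction of that cited bound (renormalize by $1/w$, add a shifted-Gaussian boundary term) is a reasonable account of what the black-boxed theorem does, and you correctly flag the one genuinely delicate step it must handle, namely relating the noise-space shift $\tL_T^{-1}\bL_T(\bx_1-\bx_2)$ to the $Y_T$-space quantity $d_{\text{mean}}$.
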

\begin{proof}
    The proof follows by combining \cite[Theorem\,5]{soleymani2020privacypreserving} and \Lref{d-upperbound}.
\end{proof}

A numerical evaluation of the bound provided in \Tref{truncation_bound} implies that, for instance, having $\theta=10$ with $t=10$, together with a very small $\frac{r}{\sigma_n}$, which is often the case in practice, we get $\eta_s' \approx \eta_s $. In other words, the privacy of dataset is not compromised by truncating the noise terms as long as $\theta$ is large enough, e.g., $\theta=10$.

\section{Experiments}\label{sec:experiments}

In this section, we demonstrate the performance of ALCC when applied to a certain computational task through experiments. In the first part of this section, it is shown that the precision of ALCC outcome closely follows that of a \textit{centralized} computation, that is when the computations are done directly at a central node without any encoding and decoding. In particular, it is shown that the accuracy of ALCC is \emph{scalable} with dataset size, i.e., the precision of the results remains \emph{almost} the same for a wide range of sizes of the dataset. In the second part, the performance of LCC \cite{yu2019lagrange} employing fixed-point representation applied to the same computational task is demonstrated. It is shown that the error in the outcome of LCC experiences a sharp increase due to overflow errors as the dataset size passes a certain threshold.

We consider the task of performing a certain matrix-matrix multiplication. For the sake of clarity, we consider computing $\bX^{\text{T}}\bX$ where $\bX \in \R^{m'\times n}$ is a \emph{tall} real-valued matrix, i.e., $m'\gg n$. Such computation is one of the main building blocks in various learning algorithms including training a linear regression model \cite{yu2019lagrange}, or a logistic regression model \cite{soleymani2020privacypreserving, so2019codedprivateml}, etc., where $\bX$ represents a dataset consisting of $m'$ samples in an $n$-dimensional feature space. The matrix $\bX$ can be represented as a batch of matrices $\bX=(\bX_1^{\text{T}}, \cdots, \bX_k^{\text{T}})^{\text{T}}$, where $\bX_i \in \R^{m \times n}$ with $m'=k\times m$. Then we have
$$
\bX^{\text{T}}\bX=\sum_{i=1}^{k} \bX_i^{\text{T}}\bX_i.
$$
Hence, the task of computing $\bX^{\text{T}}\bX$ is reduced to evaluating a degree-$2$ polynomial over a batch of matrices, consisting of $\bX_1$,\dots,$\bX_k$, for which ALCC can be utilized to provide speed up by leveraging the computational power of distributed nodes in parallel. 

Let $\bY$ denote the result of computing $\bX^{\text{T}}\bX$ in a centralized fashion employing floating-point operations. Let also $\bY'$ denote the result of a distributed computing protocol, e.g., ALCC. In order to measure the accuracy loss of the outcome in the distributed protocol compared to the centralized one, we consider the following notion of \emph{relative error}:
\be{Frob_err}
\erel \,\deff\,\frac{\norm{\bY'-\bY}}{\norm{\bY}}.
\ee  
In a sense, $\erel$ measures how much the outcome precision is proportionally compromised by utilizing a distributed protocol while providing privacy/speed up. The entries of the dataset $\bX$ in our experiments are drawn independently from a zero-mean Gaussian distribution with variance $1$. We use $64$ bits for both the fixed-point and the  floating-point numbers to implement both the LCC and the ALCC protocols in our experiments, respectively.

The relative error $\erel$, defined in \eqref{Frob_err}, is computed for the outcome of ALCC in our experiment and is shown in Table\,\ref{FLPvsBeta} for a range of values for the dataset size, that is represented by $m'$, and the Lagrange monomials parameter $\beta$. As discussed in Section\,\ref{privacy_gaussian}, it is expected that increasing $\beta$ results in lower precision outcomes which is also shown in our experiment. But note that it also leads to better privacy as shown in Figure\,\ref{etas}. Also, no notable dependence between the relative error in the outcome of ALCC and the size of dataset $m'$ is observed in Table\,\ref{FLPvsBeta}. This implies that ALCC is scalable with the dataset size as far as the relative error is concerned.

\begin{table}[]
	\centering
	\begin{tabular}{ |c|c|c|c|c|c| } 
		\hline
		\backslashbox{$m'$}  {$\beta$} & $1.1$ & $1.5$& $1.8$ & $2$ \\
		\hline
		$10^4$ & $4.466$ & $3.304$ & $2.316$& $1.699$\\ 
		\hline
		$2\times10^4$& $4.532$ & $3.307$&$2.320$ &$1.713$\\ 
		\hline
		$4\times10^4$& $4.584$ & $3.306$&$2.331$ &$1.723$\\ 
		\hline
		$6\times10^4$& $4.602$ & $3.316$&$2.326$ &$1.727$\\ 
		\hline
		$8\times10^4$& $4.612$ & $3.313$&$2.332$ &$1.731$\\ 
		\hline
		$10^5$& $4.614$ & $3.320$  & $2.334$ &$1.728$\\ 
		\hline
	\end{tabular}
	\caption{\small Demonstration of $-\log_{10} (\erel)$ in ALCC  for multiple dataset sizes and $\beta=1.1, 1.5, 1.8, 2$. Other parameters are $k=5, t=3, s=0, N=15, \sigma_n=10^6, n=100$ for all schemes.}\label{FLPvsBeta}
\end{table}

Next, the performance of LCC \cite{yu2019lagrange} employing fixed-point numbers is compared to that of ALCC from the relative error perspective. In Figure\,\ref{FXPvsFLP}, the relative error is plotted for both LCC and ALCC versus the parameter $m'$, that is proportional to the size of the dataset. For LCC, this is plotted for a few different choices for the size of the underlying finite field $p$, according to the discussion in Section\,\ref{three:B} and keeping in mind that the total number of available bits for representation is $64$. In particular, the first case discussed in Section\,\ref{three:B} is assumed where the worker nodes compute the results module $p$ only once after the polynomial evaluations are completed. Also, for ALCC, $\erel$ is plotted for two values of $\beta$. It can be observed in Figure\,\ref{FXPvsFLP} that for all the scenarios considered for LCC, there exists a certain threshold for $m'$ after which the computation results become very unreliable due to a very high $\erel$. As discussed earlier, this significant precision loss is mostly due to overflow errors that are inherent to the fixed-point implementation employed by LCC. As expected, the sharp increase in $\erel$ occurs at a larger value for $m'$ when a larger $p$ is picked. However, the choice of $p$ is limited by the number of bits available for representing fixed-point numbers. Furthermore, the advantage of ALCC compared to LCC is evident in Figure\,\ref{FXPvsFLP} by observing that the relative error in the outcome of ALCC with floating-point implementation is \emph{almost} constant for the considered range of sizes of the dataset. This motivates employing ALCC in certain problems involving very large datasets. 
\begin{figure}[t]
	\begin{center}
		\hspace{-11mm}\includegraphics[width=.9\linewidth]{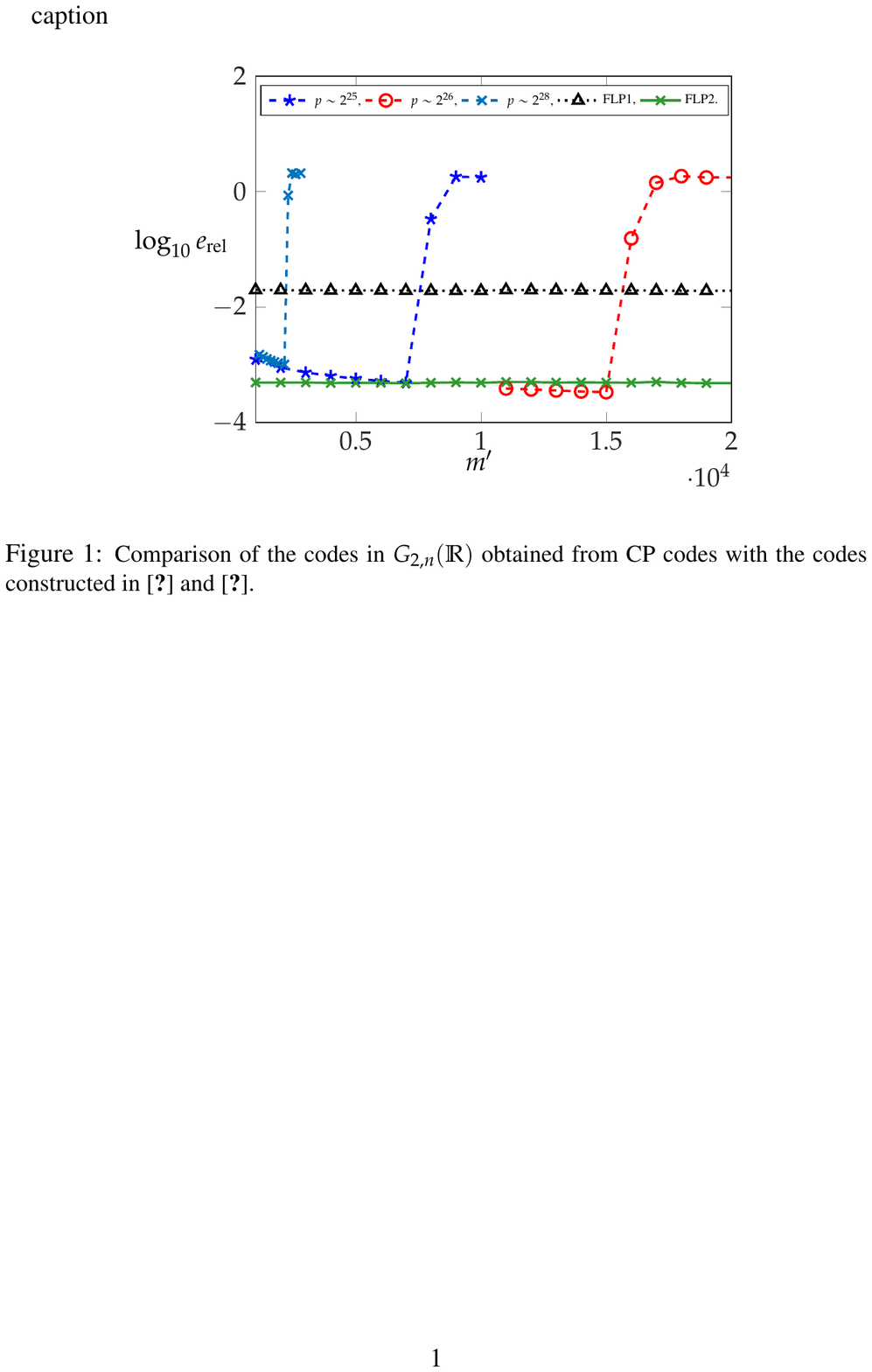}
		\caption{\small Comparison of the relative error in the outcome between ALCC and LCC. For LCC, different values of $p$ are considered ($p \sim 2^{25}$, $p \sim 2^{26}$, $p \sim 2^{28}$). For ALCC, $\beta=2$ and $\beta=1.5$ are considered for FLP1 and FLP2, respectively. Also, in both protocols we have $k=5$, $t=3$, $s=0$, $N=15$, $\sigma_n=10^6$, and $n=100$.}
		\label{FXPvsFLP}
	\end{center}
\end{figure}

\section{Conclusion}\label{sec:conclusion}

In this paper, the Lagrange coded computing framework is extended to the analog domain in order to efficiently evaluate polynomials over real-valued datasets in a distributed fashion. To this end, the analog Lagrange coded computing (ALCC) protocol is proposed that leverages Lagrange polynomials with a certain set of parameters carefully chosen in the complex plane. The privacy of ALCC is measured in terms of the DS and the MIS security metrics in the analog domain. By utilizing the relations between the DS and the MIS security measures and the existing results on the capacity of MIMO channel with correlated noise, bounds on the privacy level of data in ALCC, amidst possible collusion of workers, is characterized in terms of the aforementioned measures. Moreover, the accuracy of the outcome of ALCC is characterized assuming that the floating-point numbers are employed in the implementation of the protocol. Furthermore, a new trade-off between the accuracy of the outcome and the privacy level of the protocol is characterized that is controlled by the choice of Lagrange polynomial parameters. In our experiments, the ALCC is adopted to perform matrix-matrix multiplication and  the outcome is compared to the computation result in a centralized fashion. Finally, the scalability of ALCC and LCC with respect to the dataset size are compared together. It is shown that the accuracy of LCC significantly diminishes after the dataset size passes a certain threshold while the accuracy of ALCC remains almost constant for a wide range of dataset sizes. 

There are several directions for future work. Characterizing the accuracy and the privacy level of the ALCC protocol for a general choice of Lagrange monomial parameters $\beta_j$'s in the complex plane is an interesting direction. In particular, it is not known what choice of  $\beta_j$'s provides the best possible accuracy-privacy trade-off in ALCC and how tight the bounds provided in this paper are with respect to such an optimal scenario. Another direction is to extend ALCC in order to take into account the presence of Byzantine workers, i.e., the worker nodes that deliberately send erroneous computation results \cite{blanchard2017machine, guerraoui2013highly,cramer2015secure,bogdanov2008sharemind}. Providing an efficient and numerically accurate counterpart of Reed-Solomon decoding algorithm in the analog domain would be the main challenge in this direction. Adopting ALLC to provide speed up in performing computational tasks involved in a wide range of applications such as decentralized control, distributed optimization, data mining, etc. \cite{heydariben2018distributed,liu2005random,heydaribeni2019distributed,zhang2018improving,heydaribeni2018distributed} is another future direction. Generalizing ALCC in order to evaluate multiple polynomials in one round by applying techniques utilized in multi-user secret sharing \cite{soleymani2018distributed} is another approach to be considered for future work.     
\bibliographystyle{IEEEtran}
\bibliography{document,ref}

\begin{thebibliography}{10}
\providecommand{\url}[1]{#1}
\csname url@samestyle\endcsname
\providecommand{\newblock}{\relax}
\providecommand{\bibinfo}[2]{#2}
\providecommand{\BIBentrySTDinterwordspacing}{\spaceskip=0pt\relax}
\providecommand{\BIBentryALTinterwordstretchfactor}{4}
\providecommand{\BIBentryALTinterwordspacing}{\spaceskip=\fontdimen2\font plus
\BIBentryALTinterwordstretchfactor\fontdimen3\font minus
  \fontdimen4\font\relax}
\providecommand{\BIBforeignlanguage}[2]{{%
\expandafter\ifx\csname l@#1\endcsname\relax
\typeout{** WARNING: IEEEtran.bst: No hyphenation pattern has been}%
\typeout{** loaded for the language `#1'. Using the pattern for}%
\typeout{** the default language instead.}%
\else
\language=\csname l@#1\endcsname
\fi
#2}}
\providecommand{\BIBdecl}{\relax}
\BIBdecl

\bibitem{agrawal2000privacy}
R.~Agrawal and R.~Srikant, ``Privacy-preserving data mining,'' in
  \emph{Proceedings of the 2000 ACM SIGMOD international conference on
  Management of data}, 2000, pp. 439--450.

\bibitem{nikolaenko2013privacy}
V.~Nikolaenko, U.~Weinsberg, S.~Ioannidis, M.~Joye, D.~Boneh, and N.~Taft,
  ``Privacy-preserving ridge regression on hundreds of millions of records,''
  in \emph{2013 IEEE Symposium on Security and Privacy}.\hskip 1em plus 0.5em
  minus 0.4em\relax IEEE, 2013, pp. 334--348.

\bibitem{lee2017speeding}
K.~Lee, M.~Lam, R.~Pedarsani, D.~Papailiopoulos, and K.~Ramchandran, ``Speeding
  up distributed machine learning using codes,'' \emph{IEEE Transactions on
  Information Theory}, vol.~64, no.~3, pp. 1514--1529, 2017.

\bibitem{yu2019lagrange}
Q.~Yu, S.~Li, N.~Raviv, S.~M.~M. Kalan, M.~Soltanolkotabi, and S.~A.
  Avestimehr, ``Lagrange coded computing: Optimal design for resiliency,
  security, and privacy,'' in \emph{The 22nd International Conference on
  Artificial Intelligence and Statistics}, 2019, pp. 1215--1225.

\bibitem{li2020coded}
S.~Li, S.~Avestimehr \emph{et~al.}, ``Coded computing,'' \emph{Foundations and
  Trends{\textregistered} in Communications and Information Theory}, vol.~17,
  no.~1, pp. 1--148, 2020.

\bibitem{abadi2016tensorflow}
M.~Abadi, P.~Barham, J.~Chen, Z.~Chen, A.~Davis, J.~Dean, M.~Devin,
  S.~Ghemawat, G.~Irving, M.~Isard \emph{et~al.}, ``Tensorflow: A system for
  large-scale machine learning,'' in \emph{12th $\{$USENIX$\}$ Symposium on
  Operating Systems Design and Implementation ($\{$OSDI$\}$ 16)}, 2016, pp.
  265--283.

\bibitem{rabbat2004distributed}
M.~Rabbat and R.~Nowak, ``Distributed optimization in sensor networks,'' in
  \emph{Proceedings of the 3rd international symposium on Information
  processing in sensor networks}, 2004, pp. 20--27.

\bibitem{wright2004privacy}
R.~Wright and Z.~Yang, ``Privacy-preserving bayesian network structure
  computation on distributed heterogeneous data,'' in \emph{Proceedings of the
  tenth ACM SIGKDD international conference on Knowledge discovery and data
  mining}, 2004, pp. 713--718.

\bibitem{kantarcioglu2004privacy}
M.~Kantarcioglu and C.~Clifton, ``Privacy-preserving distributed mining of
  association rules on horizontally partitioned data,'' \emph{IEEE transactions
  on knowledge and data engineering}, vol.~16, no.~9, pp. 1026--1037, 2004.

\bibitem{clifton2002tools}
C.~Clifton, M.~Kantarcioglu, J.~Vaidya, X.~Lin, and M.~Y. Zhu, ``Tools for
  privacy preserving distributed data mining,'' \emph{ACM Sigkdd Explorations
  Newsletter}, vol.~4, no.~2, pp. 28--34, 2002.

\bibitem{goldwasser1984probabilistic}
S.~Goldwasser and S.~Micali, ``Probabilistic encryption,'' \emph{Journal of
  computer and system sciences}, vol.~28, no.~2, pp. 270--299, 1984.

\bibitem{shannon1949communication}
C.~E. Shannon, ``Communication theory of secrecy systems,'' \emph{Bell system
  technical journal}, vol.~28, no.~4, pp. 656--715, 1949.

\bibitem{dinur2003revealing}
I.~Dinur and K.~Nissim, ``Revealing information while preserving privacy,'' in
  \emph{Proceedings of the twenty-second ACM SIGMOD-SIGACT-SIGART symposium on
  Principles of database systems}, 2003, pp. 202--210.

\bibitem{shamir1979share}
A.~Shamir, ``How to share a secret,'' \emph{Communications of the ACM},
  vol.~22, no.~11, pp. 612--613, 1979.

\bibitem{ben1988completeness}
M.~Ben-Or, S.~Goldwasser, and A.~Wigderson, ``Completeness theorems for
  non-cryptographic fault-tolerant distributed computation,'' in
  \emph{Proceedings of the twentieth annual ACM symposium on Theory of
  computing}.\hskip 1em plus 0.5em minus 0.4em\relax ACM, 1988, pp. 1--10.

\bibitem{soleymani2020privacypreserving}
M.~Soleymani, H.~Mahdavifar, and A.~S. Avestimehr, ``Privacy-preserving
  distributed learning in the analog domain,'' \emph{arXiv preprint
  arXiv:2007.08803}, 2020.

\bibitem{dahl2018private}
M.~Dahl, J.~Mancuso, Y.~Dupis, B.~Decoste, M.~Giraud, I.~Livingstone,
  J.~Patriquin, and G.~Uhma, ``Private machine learning in tensorflow using
  secure computation,'' \emph{arXiv preprint arXiv:1810.08130}, 2018.

\bibitem{barak2019secure}
A.~Barak, D.~Escudero, A.~P. Dalskov, and M.~Keller, ``Secure evaluation of
  quantized neural networks.'' \emph{IACR Cryptology ePrint Archive}, vol.
  2019, p. 131, 2019.

\bibitem{so2019codedprivateml}
J.~So, B.~Guler, A.~S. Avestimehr, and P.~Mohassel, ``Coded{Private}{ML}: A
  fast and privacy-preserving framework for distributed machine learning,''
  \emph{arXiv preprint arXiv:1902.00641}, 2019.

\bibitem{kumar2019cryptflow}
N.~Kumar, M.~Rathee, N.~Chandran, D.~Gupta, A.~Rastogi, and R.~Sharma,
  ``Cryptflow: Secure tensorflow inference,'' \emph{arXiv preprint
  arXiv:1909.07814}, 2019.

\bibitem{so2020turbo}
J.~So, B.~Guler, and A.~S. Avestimehr, ``Turbo-aggregate: Breaking the
  quadratic aggregation barrier in secure federated learning,'' \emph{arXiv
  preprint arXiv:2002.04156}, 2020.

\bibitem{yu2020entangled}
Q.~Yu and A.~S. Avestimehr, ``Entangled polynomial codes for secure, private,
  and batch distributed matrix multiplication: Breaking the ``cubic''
  barrier,'' \emph{arXiv preprint arXiv:2001.05101}, 2020.

\bibitem{aliasgari2020private}
M.~Aliasgari, O.~Simeone, and J.~Kliewer, ``Private and secure distributed
  matrix multiplication with flexible communication load,'' \emph{IEEE
  Transactions on Information Forensics and Security}, vol.~15, pp. 2722--2734,
  2020.

\bibitem{d2020gasp}
R.~G. D’Oliveira, S.~El~Rouayheb, and D.~Karpuk, ``{GASP} codes for secure
  distributed matrix multiplication,'' \emph{IEEE Transactions on Information
  Theory}, vol.~66, pp. 4038--4050, 2020.

\bibitem{bitar2019private}
R.~Bitar, Y.~Xing, Y.~Keshtkarjahromi, V.~Dasari, S.~E. Rouayheb, and
  H.~Seferoglu, ``Private and rateless adaptive coded matrix-vector
  multiplication,'' \emph{arXiv preprint arXiv:1909.12611}, 2019.

\bibitem{nodehi2019secure}
H.~A. Nodehi and M.~A. Maddah-Ali, ``Secure coded multi-party computation for
  massive matrix operations,'' \emph{arXiv preprint arXiv:1908.04255}, 2019.

\bibitem{9229375}
Q.~{Yu} and A.~S. {Avestimehr}, ``Coded computing for resilient, secure, and
  privacy-preserving distributed matrix multiplication,'' \emph{IEEE
  Transactions on Communications}, vol.~69, no.~1, pp. 59--72, 2021.

\bibitem{raviv2019private}
N.~Raviv and D.~A. Karpuk, ``Private polynomial computation from lagrange
  encoding,'' \emph{IEEE Transactions on Information Forensics and Security},
  vol.~15, pp. 553--563, 2019.

\bibitem{fahim2019lagrange}
M.~Fahim and V.~R. Cadambe, ``Lagrange coded computing with sparsity
  constraints,'' in \emph{2019 57th Annual Allerton Conference on
  Communication, Control, and Computing (Allerton)}.\hskip 1em plus 0.5em minus
  0.4em\relax IEEE, 2019, pp. 284--289.

\bibitem{yao1982protocols}
A.~C. Yao, ``Protocols for secure computations,'' in \emph{23rd annual
  symposium on foundations of computer science (sfcs 1982)}.\hskip 1em plus
  0.5em minus 0.4em\relax IEEE, 1982, pp. 160--164.

\bibitem{ben2019completeness}
M.~Ben-Or, S.~Goldwasser, and A.~Wigderson, ``Completeness theorems for
  non-cryptographic fault-tolerant distributed computation,'' in
  \emph{Providing Sound Foundations for Cryptography: On the Work of Shafi
  Goldwasser and Silvio Micali}, 2019, pp. 351--371.

\bibitem{gascon2017privacy}
A.~Gasc{\'o}n, P.~Schoppmann, B.~Balle, M.~Raykova, J.~Doerner, S.~Zahur, and
  D.~Evans, ``Privacy-preserving distributed linear regression on
  high-dimensional data,'' \emph{Proceedings on Privacy Enhancing
  Technologies}, vol. 2017, no.~4, pp. 345--364, 2017.

\bibitem{mohassel2017secureml}
P.~Mohassel and Y.~Zhang, ``Secure{ML}: A system for scalable
  privacy-preserving machine learning,'' in \emph{2017 IEEE Symposium on
  Security and Privacy (SP)}.\hskip 1em plus 0.5em minus 0.4em\relax IEEE,
  2017, pp. 19--38.

\bibitem{chen2019secure}
V.~Chen, V.~Pastro, and M.~Raykova, ``Secure computation for machine learning
  with spdz,'' \emph{arXiv preprint arXiv:1901.00329}, 2019.

\bibitem{so2020scalable}
J.~So, B.~Guler, and S.~Avestimehr, ``A scalable approach for
  privacy-preserving collaborative machine learning,'' \emph{Advances in Neural
  Information Processing Systems}, vol.~33, 2020.

\bibitem{setty2012taking}
S.~Setty, V.~Vu, N.~Panpalia, B.~Braun, A.~J. Blumberg, and M.~Walfish,
  ``Taking proof-based verified computation a few steps closer to
  practicality,'' in \emph{Presented as part of the 21st $\{$USENIX$\}$
  Security Symposium ($\{$USENIX$\}$ Security 12)}, 2012, pp. 253--268.

\bibitem{aliasgari2013secure}
M.~Aliasgari, M.~Blanton, Y.~Zhang, and A.~Steele, ``Secure computation on
  floating point numbers.'' in \emph{NDSS}, 2013.

\bibitem{catrinatowards}
O.~Catrina, ``Towards practical secure computation with floating-point
  numbers,'' in \emph{3rd Annual International Conference on Cryptography and
  Information Security}, 2018.

\bibitem{fahim2019numerically}
M.~Fahim and V.~R. Cadambe, ``Numerically stable polynomially coded
  computing,'' in \emph{2019 IEEE International Symposium on Information Theory
  (ISIT)}.\hskip 1em plus 0.5em minus 0.4em\relax IEEE, 2019, pp. 3017--3021.

\bibitem{ramamoorthy2019numerically}
A.~Ramamoorthy and L.~Tang, ``Numerically stable coded matrix computations via
  circulant and rotation matrix embeddings,'' \emph{arXiv preprint
  arXiv:1910.06515}, 2019.

\bibitem{das2019distributed}
A.~B. Das and A.~Ramamoorthy, ``Distributed matrix-vector multiplication: A
  convolutional coding approach,'' in \emph{2019 IEEE International Symposium
  on Information Theory (ISIT)}.\hskip 1em plus 0.5em minus 0.4em\relax IEEE,
  2019, pp. 3022--3026.

\bibitem{jamali2019coded}
M.~V. Jamali, M.~Soleymani, and H.~Mahdavifar, ``Coded distributed computing:
  Performance limits and code designs,'' in \emph{2019 IEEE Information Theory
  Workshop (ITW)}.\hskip 1em plus 0.5em minus 0.4em\relax IEEE, 2019, pp. 1--5.

\bibitem{charalambides2020numerically}
N.~Charalambides, H.~Mahdavifar, and A.~O. Hero~III, ``Numerically stable
  binary gradient coding,'' \emph{arXiv preprint arXiv:2001.11449}, 2020.

\bibitem{roth2020analog}
R.~M. Roth, ``Analog error-correcting codes,'' \emph{IEEE Transactions on
  Information Theory}, 2020.

\bibitem{soleymani2019analog}
M.~Soleymani and H.~Mahdavifar, ``Analog subspace coding: A new approach to
  coding for non-coherent wireless networks,'' \emph{arXiv preprint
  arXiv:1909.07533}, 2019.

\bibitem{nussbaumer1981fast}
H.~J. Nussbaumer, ``The fast {Fourier} transform,'' in \emph{Fast Fourier
  Transform and Convolution Algorithms}.\hskip 1em plus 0.5em minus 0.4em\relax
  Springer, 1981, pp. 80--111.

\bibitem{demmel1997applied}
J.~W. Demmel, \emph{Applied numerical linear algebra}.\hskip 1em plus 0.5em
  minus 0.4em\relax Siam, 1997, vol.~56.

\bibitem{ling2014semantically}
C.~Ling, L.~Luzzi, J.-C. Belfiore, and D.~Stehl{\'e}, ``Semantically secure
  lattice codes for the gaussian wiretap channel,'' \emph{IEEE Transactions on
  Information Theory}, vol.~60, no.~10, pp. 6399--6416, 2014.

\bibitem{schumacher2002antenna}
L.~Schumacher, K.~I. Pedersen, and P.~E. Mogensen, ``From antenna spacings to
  theoretical capacities-guidelines for simulating {MIMO} systems,'' in
  \emph{The 13th IEEE international symposium on personal, indoor and mobile
  radio communications}, vol.~2.\hskip 1em plus 0.5em minus 0.4em\relax IEEE,
  2002, pp. 587--592.

\bibitem{bellare2012cryptographic}
M.~Bellare, S.~Tessaro, and A.~Vardy, ``A cryptographic treatment of the
  wiretap channel,'' \emph{Advances in Cryptology – CRYPTO}, 2012.

\bibitem{blanchard2017machine}
P.~Blanchard, R.~Guerraoui, J.~Stainer \emph{et~al.}, ``Machine learning with
  adversaries: Byzantine tolerant gradient descent,'' in \emph{Advances in
  Neural Information Processing Systems}, 2017, pp. 119--129.

\bibitem{guerraoui2013highly}
R.~Guerraoui, F.~Huc, and A.-M. Kermarrec, ``Highly dynamic distributed
  computing with {Byzantine} failures,'' in \emph{Proceedings of the 2013 ACM
  symposium on Principles of distributed computing}, 2013, pp. 176--183.

\bibitem{cramer2015secure}
R.~Cramer, I.~B. Damg{\aa}rd, and J.~B. Nielsen, \emph{Secure multiparty
  computation}.\hskip 1em plus 0.5em minus 0.4em\relax Cambridge University
  Press, 2015.

\bibitem{bogdanov2008sharemind}
D.~Bogdanov, S.~Laur, and J.~Willemson, ``Sharemind: A framework for fast
  privacy-preserving computations,'' in \emph{European Symposium on Research in
  Computer Security}.\hskip 1em plus 0.5em minus 0.4em\relax Springer, 2008,
  pp. 192--206.

\bibitem{heydariben2018distributed}
N.~Heydaribeni and A.~Anastasopoulos, ``Distributed mechanism design for
  unicast transmission,'' in \emph{2018 Information Theory and Applications
  Workshop (ITA)}.\hskip 1em plus 0.5em minus 0.4em\relax IEEE, 2018, pp. 1--6.

\bibitem{liu2005random}
K.~Liu, H.~Kargupta, and J.~Ryan, ``Random projection-based multiplicative data
  perturbation for privacy preserving distributed data mining,'' \emph{IEEE
  Transactions on knowledge and Data Engineering}, vol.~18, no.~1, pp. 92--106,
  2005.

\bibitem{heydaribeni2019distributed}
N.~Heydaribeni and A.~Anastasopoulos, ``Distributed mechanism design for
  network resource allocation problems,'' \emph{IEEE Transactions on Network
  Science and Engineering}, 2019.

\bibitem{zhang2018improving}
X.~Zhang, M.~M. Khalili, and M.~Liu, ``Improving the privacy and accuracy of
  {ADMM}-based distributed algorithms,'' \emph{arXiv preprint
  arXiv:1806.02246}, 2018.

\bibitem{heydaribeni2018distributed}
N.~Heydaribeni and A.~Anastasopoulos, ``Distributed mechanism design for
  multicast transmission,'' in \emph{2018 IEEE Conference on Decision and
  Control (CDC)}.\hskip 1em plus 0.5em minus 0.4em\relax IEEE, 2018, pp.
  4200--4205.

\bibitem{soleymani2018distributed}
M.~Soleymani and H.~Mahdavifar, ``Distributed multi-user secret sharing,'' in
  \emph{2018 IEEE International Symposium on Information Theory (ISIT)}.\hskip
  1em plus 0.5em minus 0.4em\relax IEEE, 2018, pp. 1141--1145.

\end{thebibliography}

\end{document}